\theoremstyle{definition}
\newtheorem{definition}{Definition}[section]
\newtheorem{remark}[definition]{Remark}
\newtheorem{theorem}{Theorem}[section]
\newtheorem{corollary}[theorem]{Corollary}
\newtheorem{lemma}[theorem]{Lemma}
\newtheorem{fact}[theorem]{Fact}
\newcommand{\N}{\mathbb{N}}
\newcommand{\Z}{\mathbb{Z}}
\newcommand{\R}{\mathbb{R}}
\newcommand{\Prob}{\mathbb{P}}
\newcommand{\bigo}{\mathcal{O}}
\newcommand{\innerproduct}[2]{\langle #1, #2 \rangle}
\newcommand{\abs}[1]{\lvert #1 \rvert}
\newcommand{\Bigabs}[1]{\Big\lvert #1 \Big\rvert}
\newcommand{\norm}[1]{\lVert #1 \rVert}
\DeclarePairedDelimiter\bra{\langle}{\rvert}
\DeclarePairedDelimiter\ket{\lvert}{\rangle}
\DeclarePairedDelimiterX\braket[2]{\langle}{\rangle}{#1\,\delimsize\vert\,\mathopen{}#2}
\title{Lattice Based Crypto Breaks in a Superposition of Spacetimes}
\author{
Divesh Aggarwal\\National University of Singapore\\ \texttt{divesh@comp.nus.edu.sg} \and 
Shashwat Agrawal\\ Indian Institute of Technology Delhi  \\ \texttt{csz248012@iitd.ac.in} \and
Rajendra Kumar\\ Indian Institute of Technology Delhi  \\ \texttt{rajendra@cse.iitd.ac.in}}
\date{}
\begin{document}

\maketitle

\begin{abstract}
We explore the computational implications of a superposition of spacetimes, a phenomenon hypothesized in quantum gravity theories. This was initiated by Shmueli (2024) where the author introduced the complexity class $\mathbf{BQP^{OI}}$ consisting of promise problems decidable by quantum polynomial time algorithms with access to an oracle for computing order interference. In this work, it was shown that the Graph Isomorphism problem and the Gap Closest Vector Problem (with approximation factor $\bigo(n^{3/2})$) are in $\mathbf{BQP^{OI}}$. We extend this result by showing that the entire complexity class $\mathbf{SZK}$ (Statistical Zero Knowledge) is contained within $\mathbf{BQP^{OI}}$. This immediately implies that the security of numerous lattice based cryptography schemes will be compromised in a computational model based on superposition of spacetimes, since these often rely on the hardness of the Learning with Errors problem, which is in $\mathbf{SZK}$.
\end{abstract}

\section{Introduction}

The advent of Quantum Computing has compelled computer scientists and information theorists across all domains to question the hardness of problems given access to this tool. Cryptography is one of the most prominent areas affected by Quantum Computing. Indeed, the current popular public key cryptosystems that are mostly based on the hardness of integer factoring (RSA) or the discrete logarithm (Diffie Hellman) can be rendered obsolete (atleast in theory) due to Shor's quantum algorithms \cite{shor} which can efficiently solve both these problems. This has led to the birth of post-quantum cryptography, i.e. building cryptographic primitives based on the hardness of problems which are believed to be hard to solve efficiently even by quantum computers. Currently the most promising candidates for this purpose are lattice problems, such as the Gap Shortest Vector Problem, Lattice Isomorphism, Learning with Errors, etc.
\cite{regevLatticesLearningErrors2009,ABD+CRYSTALSKyberVersion022021, NIST2022,cryptoeprint:2021/1332} But what happens if we allow more computational power to our model motivated by some phenomena in physics? 

\textbf{Superposition of Spacetimes.}
Quantum Computers and Quantum Information Theory have been designed on the principles of Quantum Mechanics (QM). Quantum Mechanics (together with Quantum Field Theory) describes the interaction between the fundamental forces: strong nuclear force, weak nuclear force and electromagnetism. Apart from this, the theory of General Relativity (GR) deals with the fourth fundamental force: gravity. A unified theory that can generalize both QM and GR and can be experimentally verified is one of the biggest open question in physics. Such a theory is referred to as Quantum Gravity. Although there is no experimental evidence yet to support such a theory, there are some physical phenomena expected to take place within any valid quantum gravity theory. One of the fundamental expected phenomena unique to quantum gravity is the possibility for a superposition of spacetime geometries \cite{qg_effects_1,qg_effects_2,qg_effects_3,qg_effects_4,qg_effects_5,qg_effects_6}. This is hypothesized by combining two phenomena: quantum superposition of all dynamic quantities (due to QM) and variation in the geometry of spacetime (due to GR). 

Consider the following thought experiment by Zych, Costa, Pikovski and Brukner \cite{qg_experiment} (with small variations)\footnote{this is borrowed from Section 1.1 in \cite{bqp_oi}}: Let $S_M$ and $S_U$ be two spatially isolated physical systems. $S_M$ contains an object of mass $M$ and two possible locations $x_0, x_1$ for the object. $S_U$ consists of a quantum register $R$ in a state $\ket{\psi}$, two atomic clocks $C_0, C_1$ and a central processing unit $C_c$ that gets signals from the clocks, such that it executes the unitary $U_i$ on $R$ when getting a signal from clock $C_i$. Further, when $M$ is at location $x_0$, due to gravitational time dilation, clock $C_1$ ticks slower (than $C_0$), and when $M$ is at $x_1$, clock $C_0$ ticks slower. As a consequence, when $M$ is at $x_0$, the register $R$ gets the state $U_1U_0\ket{\psi}$, and when $M$ is at $x_1$, it has the state $U_0U_1\ket{\psi}$. However, what can we say about the joint state of the system when $M$ is in superposition of $x_0$ and $x_1$?

The gravitational decoherence hypothesis \cite{gravitational_decoherence_1, gravitational_decoherence_2, gravitational_decoherence_3, gravitational_decoherence_4} says that a quantum superposition of a massive object will fundamentally decohere, turning into a probabilistic mixture of classical states. In this case the joint system ($S_M, S_U$) will be in the uniform mixed state:
$$\{\ket{x_0}_{S_M}\otimes U_1U_0\ket{\psi}_{S_U},\ket{x_1}_{S_M}\otimes U_0U_1\ket{\psi}_{S_U}\}$$
On the other hand, gravitationally-induced entanglement (GIE) \cite{gie_1,gie_2} says that not only a massive object can be in a superposition, but that the gravitational field can be used to mediate entanglement. In this scenario the joint system ($S_M, S_U$) will be in the pure quantum state:
$$\frac{1}{\sqrt{2}}\ket{x_0}_{S_M}\otimes U_1U_0\ket{\psi}_{S_U} + \frac{1}{\sqrt{2}}\ket{x_1}_{S_M}\otimes U_0U_1\ket{\psi}_{S_U}$$

It is not hard to see that both of these possibilities can be efficiently simulated using standard quantum computing. However there is a third possibility; correlation without entanglement (this is known to be possible, e.g., quantum discord \cite{quantum_discord}), i.e. a massive object can be maintained in a coherent superposition, as opposed to gravitational decoherence, but gravity alone cannot create entanglement, as opposed to GIE. In this scenario, the joint system ($S_M, S_U$) will be in the following state (upto normalization):
$$(\ket{x_0} + \ket{x_1})_{S_M}\otimes (U_1U_0\ket{\psi} + U_0U_1\ket{\psi})_{S_U}$$

\textbf{Computational Order Interference.}
Our focus is on the computational implications in the last scenario. Notice that the norm of $(U_1U_0\ket{\psi} + U_0U_1\ket{\psi})$ depends on $U_1, U_0, \ket{\psi}$. In fact, this vector might be the zero vector, for eg. if $U_0 = X, U_1 = Z$ then
$$(U_1U_0\ket{\psi} + U_0U_1\ket{\psi}) = (XZ + ZX)\ket{\psi} = \vec{0}$$

More generally, \cite{costa} shows that an unconditional generation of such a state is not possible as it cannot yield a Process Matrix. The Process Matrix Formalism \cite{pmf} (PMF) is a mathematical framework for defining quantum processes that are not constrained by a causal execution order. A transformation which is not a process matrix is in particular not a valid quantum process in standard quantum mechanics.

However \cite{bqp_oi} considers a relaxation of the (unconditional) generation task; in particular given $(U_0, U_1, \ket{\psi})$, they allow computing the above state in time inversely proportional to the norm of the state. \footnote{In addition, the computation time also depends on the amount of phase alignment between execution orders; but that is not very important for the present discussion.} This state is called the Order Interference state corresponding to $(U_0, U_1, \ket{\psi})$. More generally, if we are given $m$ unitaries, the order interference state is defined as

\begin{align*}
    \text{OI}(\{U_i\}_{i\in[m]},\ket{\psi}) := \sum_{\sigma\in S_m}\Big(\prod_{i\in[m]}U_{\sigma^{-1}(i)}\Big)\ket{\psi}
\end{align*}
where $\prod_{i\in[m]}U_{\sigma^{-1}(i)}$ denotes $U_{\sigma^{-1}(m)}\ldots U_{\sigma^{-1}(1)}$

\textbf{Prior Work.}
$\mathbf{BQP}$ is the computational complexity class associated with promise problems solvable using a uniform family of polynomially sized quantum circuits. It represents the class of problems solvable efficiently using quantum computers. Shmueli \cite{bqp_oi} considered a computational model wherein such quantum circuits are given an additional power; the ability to use an oracle which computes order interference of a given set of unitaries on a quantum state. The complexity class corresponding to such computational power is called $\mathbf{BQP^{OI}}$. In the same work, Shmueli also introduced the Sequentially Invertible Statistical Difference (SISD) problem, which is a variant of the well-known Statistical Difference problem. SISD for certain parameters was shown to belong in $\mathbf{BQP^{OI}}$. Note that like Statistical Difference, SISD is also a completely classical problem. Further, he considered two important problems in theoretical computer science, namely the Graph Isomorphism problem and the Gap Closest Vector problem (the latter is a crucial problem for post-quantum cryptography), which are not known to be decidable by efficient quantum algorithms. Both of these were shown to have a classical reduction to SISD (for suitable parameters), which implies that they are decidable in $\mathbf{BQP^{OI}}$. 

\textbf{Our Result.}
Many cryptographic-relevant problems as well as other hard algorithmic problems have been shown to be in the complexity class $\mathbf{SZK}$, which constitutes promise problems that have interactive proof systems where a prover can convince a verifier of a ``yes'' instance without revealing any additional information beyond the validity of the statement. This zero-knowledge property is statistical, meaning that the verifier's view can be simulated to within a negligible statistical distance. The hardest problems in $\mathbf{SZK}$ are currently not known to be decidable by efficient quantum algorithms. Sahai and Vadhan \cite{sv} introduced the Statistical Difference (SD) problem and showed that it is a complete problem for $\mathbf{SZK}$.

The problems considered by Shmueli; namely the Graph Isomorphism problem and the Gap Closest Vector problem (for suitable approximation factor) have been shown to be in $\mathbf{SZK}$. In this work, we extend the prior work by showing that the entire class $\mathbf{SZK}$ is contained in $\mathbf{BQP^{OI}}$. 

\begin{theorem}(Informal Main Theorem: $\mathbf{SZK} \subseteq \mathbf{BQP^{OI}}$) For every promise problem $\Pi \in \mathbf{SZK}$, there exists a polynomial time quantum algorithm with access to a computable order interference oracle $\mathcal{OI}$ which decides $\Pi$ correctly with high probability.
\end{theorem}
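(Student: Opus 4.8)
The plan is to leverage the completeness of the Statistical Difference problem for $\mathbf{SZK}$ together with the prior containment of SISD in $\mathbf{BQP^{OI}}$. Since Sahai and Vadhan \cite{sv} show that $\mathrm{SD}$ is $\mathbf{SZK}$-complete under Karp reductions, it suffices to exhibit a polynomial-time classical reduction from $\mathrm{SD}$ to SISD; composing this reduction with Shmueli's algorithm \cite{bqp_oi} for SISD then decides $\mathrm{SD}$, and hence every $\Pi \in \mathbf{SZK}$, in $\mathbf{BQP^{OI}}$. Thus the whole argument reduces to understanding what separates a generic $\mathrm{SD}$ instance from the structured SISD instances that the order-interference oracle can handle.

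First I would bring an arbitrary $\mathrm{SD}$ instance into a convenient normal form. Given sampler circuits $C_0, C_1$ for distributions $D_0, D_1$, apply the polarization lemma of Sahai and Vadhan to obtain new samplers whose statistical distance is negligible in the \textbf{yes} case and exponentially close to $1$ in the \textbf{no} case. This amplified gap is what will later translate into a detectable gap in the norm of the order-interference state, and it is what lets the oracle run in polynomial time on the side where the norm is bounded below. The core step is then to make the samplers invertible as required by SISD. A generic sampling circuit is not a permutation, so I would reversibilize each $C_b$ in the style of Bennett, computing $C_b$ into a fresh output register while retaining the input and scratch bits in ancillary registers, yielding bijections $R_b$ that are efficiently invertible in both directions. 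The crucial point is that the comparison in the resulting instance is taken only on the designated output register: the marginal of $R_b(x,0\cdots0)$ on that register, for uniform $x$, is exactly $D_b$, so reversibilization preserves the statistical distance even though it introduces garbage registers. One then packages $(R_0, R_1)$ as unitaries $U_0, U_1$ acting on $\ket{0}$, for which the order-interference oracle is defined.

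The reason the oracle suffices is that the norm of the order-interference state built from these unitaries is controlled by an overlap of the form $\sum_x \sqrt{D_0(x)D_1(x)}$, equivalently a commutator expectation $\langle\psi| U_0^\dagger U_1^\dagger U_0 U_1|\psi\rangle$, which is close to $1$ precisely when the distributions are statistically close and close to $0$ when they are far. After polarization these two regimes are separated by a constant, so the running time of the oracle, being inverse in the norm, distinguishes \textbf{yes} from \textbf{no}.

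I expect the main obstacle to be matching the normalized $\mathrm{SD}$ instance to the exact parameter window in which SISD was shown to lie in $\mathbf{BQP^{OI}}$: one must verify simultaneously that the invertibility and efficient-inversion conditions hold, that the garbage registers do not corrupt the relevant overlap, and that the polarized gap lands inside the admissible range of norms where the oracle terminates in polynomial time. Should the black-box SISD parameters prove too tight, the contingency is to instantiate Shmueli's order-interference distinguisher directly on the reversibilized samplers, re-deriving the norm-to-statistical-distance relationship with the parameters produced by polarization rather than routing through SISD as a sealed subroutine.
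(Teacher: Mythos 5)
Your overall architecture --- invoke the $\mathbf{SZK}$-completeness of $\mathrm{SD}$, reduce $\mathrm{SD}$ to $\mathrm{SISD}$, and then apply Theorem~\ref{sisd in bqp^oi} --- is exactly the paper's route, and your instinct to polarize first is sound (indeed, $a=\frac13, b=\frac23$ gives $b^2-2a+a^2=-\frac19<0$, so some amplification is needed before the pair of distributions lands in $\mathrm{SISD}_{\poly,\bigo(\log n)}$; the paper glosses over this). However, your reduction has a genuine gap at its central step, the Bennett-style reversibilization. The $\mathrm{SISD}$ problem (Definition~\ref{SISD}) compares the \emph{full} output distributions of the two circuit sequences; there is no provision for designating an output register and taking a marginal. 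Your bijection $R_b$ maps $(x,0\cdots 0)\mapsto (x,\mathrm{garbage},C_b(x))$, so the distribution actually being compared is $\{(x,C_b(x))\}_x$, not $D(C_b)$, and statistical distance is \emph{not} preserved: take $C_0(x)=x$ and $C_1(x)=x\oplus 1^k$. Then $D(C_0)=D(C_1)$ is uniform (distance $0$), yet $\{(x,C_0(x))\}$ and $\{(x,C_1(x))\}$ have disjoint supports (distance $1$), so a YES instance becomes a NO instance. The same retained-input problem defeats your contingency of running the order-interference/swap test directly on the states $\sum_x \ket{x}\ket{C_b(x)}$: the overlap of these entangled states counts collisions $\abs{\{x: C_0(x)=C_1(x)\}}$ rather than the fidelity $\sum_y \sqrt{D_0(y)D_1(y)}$, which is precisely the classic garbage/index-erasure obstruction to solving $\mathbf{SZK}$ by quantum sampling. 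A further issue: your $R_b$ is deterministic, so you never explain how the $k_b$ bits of sampling randomness enter through a \emph{sequence} of invertible circuits each using at most $\bigo(\log n)$ random bits, which is the structural requirement that makes $\mathrm{SISD}$ tractable for the oracle.

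The idea you are missing --- and the actual content of the paper's Theorem~\ref{sequentially invertible circuits preserving statistical difference} --- is to destroy the correlation between the retained input and the output using \emph{fresh} randomness, injected one bit at a time so that every step stays invertible per hard-wired randomness. The paper's sequence has length $2k_I+1$ with $r_i\leq 1$: the first $k_I$ circuits each XOR one fresh random bit into one coordinate of the input register (building a uniform $x$ bit by bit); the middle circuit maps $(x,y)\mapsto (x,\,y\oplus C_b(x_{1\ldots k_b}))$, which uses no randomness and is an involution, hence trivially invertible; and the last $k_I$ circuits XOR fresh random bits into the input register again, re-randomizing it. The final output is $(x',C_b(x))$ with $x'$ uniform and \emph{independent} of $x$, so the full-output statistical distance equals $\Delta(D(C_0),D(C_1))$ exactly, and the instance is $1$-sequentially invertible as required. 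In short: your reduction keeps the input and hopes the definition ignores it; the paper's reduction actively erases the input with new randomness, and that erasure is the whole theorem.
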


We achieve this by giving a poly-time Karp reduction from SD itself to SISD. The idea for this reduction is very straightforward, and the intuition is as follows: Given circuits $C_0, C_1$ as inputs for SD, we construct circuit families $\mathfrak{C}^0 = \{C_{i}^0\}_{i\in[l]}, \mathfrak{C}^1 = \{C_{i}^0\}_{i\in[l]}$ which are ``invertible per hard-wired randomness'', i.e. valid instances of SISD\footnote{the SISD problem is defined in Definition~\ref{SISD}}. The output of $\mathfrak{C}^b$ is $(x',C_b(x))$ for $x',x$ being independent and uniformly random strings from the input space of $C_b$. As a consequence, the marginal distribution of the second element in the above tuple is exactly $D(C_b)$ and thus the statistical difference between the distributions of $C_0,C_1$ is the same as the statistical difference between the distributions $\mathfrak{C}^0, \mathfrak{C}^1$.

A major stakeholder in post-quantum cryptography is lattice-based cryptography, and the security of most of the lattice-based cryptographic primitives is based on the hardness of the Learning with Errors (LWE) problem (see, e.g. an overview in \cite{crypto_primitives_on_LWE}). We elaborate on the fact that the LWE problem is in $\mathbf{SZK}$ for parameters where its hardness is based on the hardness of lattice problems (and those parameters can be used in previously known public-key encryption schemes). With this, $\mathbf{SZK}$ having efficient algorithms in a computational model can be viewed in two ways. On one hand, if one believes that $\mathbf{SZK}$ is hard, then perhaps the complexity class $\mathbf{BQP^{OI}}$ is intractable as a computational model. On the other hand, if superposition of spacetimes not only exists in the universe, but we are also able to harness this phenomenon in realizing computational order interference, then we can efficiently solve $\mathbf{SZK}$ problems and thereby attack lattice-based cryptographic primitives. 

\section{Preliminaries}

\textbf{Notations.} $\N$ denotes the set of natural numbers. For $n\in\N$, $[n]$ denotes the set $\{1,2,\ldots,n\}$. For $m\in\N$, $S_m$ denotes the set of all permutations on $[m]$. $\{0,1\}^* = \cup_{i\in\N}\{0,1\}^i$ is the set of all binary strings of finite length. For a binary string $x$ of length $>j$, $x_{1\ldots j}$ denotes the $j$ length prefix of $x$. $\oplus$ denotes the bit-wise xor of two strings of equal length. $\innerproduct{}{}$ denotes the dot product between two $n$-tuples. $\Prob$ denotes the probability function. For a set $S$, we denote $x\xleftarrow{\$}S$ for a sample $x$ chosen uniformly at random from $S$. $U(S)$ denotes the uniform distribution on set $S$. For a circuit $C:\{0,1\}^{k'}\to\{0,1\}^k$, we denote its distribution $D(C) := \{C(x)\ |\ x\xleftarrow{\$}\{0,1\}^{k'}\}$.

\subsection{Order Interference}

We first recall the complexity class $\mathbf{BQP}$ which is considered the quantum analogue to the classical complexity class $\mathbf{BPP}$ (Bounded-error Probabilistic Polynomial time). It represents the class of problems that can be efficiently solved by quantum computers.

\begin{definition}(The Complexity Class $\mathbf{BQP}$)
    The complexity class $\mathbf{BQP}$ (short for Bounded Error Quantum Polynomial Time) consists of all promise problems $\Pi = (\Pi_{\text{YES}},\Pi_{\text{NO}})$ for which there exists a polynomial time uniform family of quantum circuits $\{Q_n\}_{n\in\N}$ such that
    \begin{itemize}
        \item For all $n\in\N$, $Q_n$ takes $n$ qubits as input and outputs one bit.
        \item If $x\in\Pi_{\text{YES}}$, then $\Prob[Q_{\abs{x}}(x)=1] \geq 2/3$
        \item If $x\in\Pi_{\text{NO}}$, then $\Prob[Q_{\abs{x}}(x)=1] \leq 1/3$
    \end{itemize}
\end{definition}

\cite{bqp_oi} defined a specific computational task motivated by superposition of spacetimes, a possible phenomenon in Quantum Gravity. 

\begin{definition}
    (Order Interference) Let $U_1,\ldots,U_m$ be $n$-qubit unitaries and $\ket{\psi}$ be a $n$-qubit quantum state. We define the order interference corresponding to $(\{U_i\}_{i\in[m]},\ket{\psi})$ as the $2^n$ dimensional vector corresponding to the superposition of all possible sequences of unitaries $\{U_i\}_{i\in[m]}$ acting on $\ket{\psi}$
    \begin{align*}
        \text{OI}(\{U_i\}_{i\in[m]},\ket{\psi}) := \sum_{\sigma\in S_m}\Big(\prod_{i\in[m]}U_{\sigma^{-1}(i)}\Big)\ket{\psi}
    \end{align*}
    where $\prod_{i\in[m]}U_{\sigma^{-1}(i)}$ denotes $U_{\sigma^{-1}(m)}\ldots U_{\sigma^{-1}(1)}$
\end{definition}

\begin{remark}
    Shmueli actually calls the above a \emph{uniform} order interference state, and further gives a definition for a generalized order interference state. However, we only need to use the above definition in this work.
\end{remark}

Note that $\norm{\text{OI}(\{U_i\}_{i\in[m]},\ket{\psi})} \in [0,m!]$. Thus $\text{OI}(\{U_i\}_{i\in[m]},\ket{\psi})$ may not be a quantum state in general, even after normalization. Indeed, for $U_1 = X, U_2 = Z$, ($X,Z$ being Pauli operators) we have 
$$\text{OI}(\{U_1,U_2\},\ket{\psi}) = (XZ + ZX)\ket{\psi} = \vec{0}$$
so this cannot be a valid quantum state. 

As mentioned in the introduction, \cite{bqp_oi} considers a relaxation of the (unconditional) generation task of OI; in particular they allow the order interference task a computation time dependent on the norm of the OI state and the amount of phase alignment between execution orders (or spacetimes). For every classical state $\ket{x}, x\in\{0,1\}^n$, the amount of phase alignment is how similar are the phases of a particular $\ket{x}$ between the results of differing execution orders $\Big(\prod_{i\in[m]}U_{\sigma^{-1}(i)}\Big)\ket{\psi}$, over all possible $x\in\{0,1\}^n$.

More precisely they provide oracle access to the OI task in the following manner:

\begin{definition}
    (Definition 4.2 in \cite{bqp_oi}) Let $\{U_i\}_{i\in[m]}$ be a set of unitaries acting on $n$-qubit states, and let $\ket{\psi}$ be a $n$-qubit quantum state. An oracle $\mathcal{OI}$ takes as input a triplet: $(\{U_i\}_{i\in[m]},\ket{\psi},\lambda)$ for $\lambda\in\N$ and with probability
    $$\Bigg(\frac{\sum_{x\in\{0,1\}^n{\abs{\sum_{\sigma\in S_m}{\alpha_{x,\sigma}}}}}}{\sum_{x\in\{0,1\}^n{\sum_{\sigma\in S_m}{\abs{\alpha_{x,\sigma}}}}}}\Bigg) \Bigg(\frac{\frac{\norm{\text{OI}(\{U_i\}_{i\in[m]},\ket{\psi})}}{m!}}{\frac{\norm{\text{OI}(\{U_i\}_{i\in[m]},\ket{\psi})}}{m!}+\frac{1}{\lambda}}\Bigg)$$
    obtains the normalized state of $\text{OI}(\{U_i\}_{i\in[m]},\ket{\psi})$. Moreover it takes a time complexity of $\bigo(\lambda+\sum_{i\in[m]}{\abs{U_i}})$
\end{definition}

\begin{remark}
    Shmueli also defined two variants of the above oracle which are less restrictive; one where the probability of outputting a valid OI state is not dependent on the norm of the OI state, and another where the same is not dependent on the phase alignment. 

    For the latter definition of $\mathcal{OI}$, he remarks that the complexity classes $\mathbf{NP}$ and $\mathbf{QCMA}$ are contained in $\mathbf{BQP^{OI}}$.\footnote{This has not been cited in \cite{bqp_oi}, but personally communicated to Shmueli by Scott Aaronson and Joseph Carolan.}
    
    Again, for our purpose, the above definition suffices.
\end{remark}

Next we define the notion of choice interference. Intuitively, given $m$ unitaries and a quantum state, this is a superposition of all choices of unitaries to act on the quantum state.

\begin{definition}(Definition 5.1 in \cite{bqp_oi})
    Let $\{U_i\}_{i\in[m]}$ be a set of unitaries acting on $n$-qubit states, and let $\ket{\psi}$ be a $n$-qubit quantum state. The choice interference corresponding to $(\{U_i\}_{i\in[m]},\ket{\psi})$ is the $2^n$ dimensional vector which is the superposition of all unitaries $\{U_i\}_{i\in[m]}$ acting on $\ket{\psi}$
    \begin{align*}
        \text{CI}(\{U_i\}_{i\in[m]},\ket{\psi}) := \sum_{i\in[m]}U_i\ket{\psi}
    \end{align*}
\end{definition}

Shmueli then showed that a normalization of the choice interference state can be obtained using an $\mathcal{OI}$ oracle. 

\begin{lemma}\label{OI implies CI}(Lemma 5.1 in \cite{bqp_oi})
    Let $\{U_i\}_{i\in[m]}$ be a set of unitaries acting on $n$-qubit states, and let $\ket{\psi}$ be a $n$-qubit quantum state. There exists a quantum algorithm $Q_{CI}$ with oracle access to $\mathcal{OI}$ that takes as input a triplet $(\{U_i\}_{i\in[m]},\ket{\psi},\lambda)$ for $\lambda\in\N$ and with probability
    $$\Bigg(\frac{\sum_{x\in\{0,1\}^n{\abs{\sum_{i\in[m]}{\alpha_{x,i}}}}}}{\sum_{x\in\{0,1\}^n{\sum_{i\in[m]}{\abs{\alpha_{x,i}}}}}}\Bigg) \Bigg(\frac{\frac{\norm{\text{CI}(\{U_i\}_{i\in[m]},\ket{\psi})}}{m}}{\frac{\norm{\text{CI}(\{U_i\}_{i\in[m]},\ket{\psi})}}{m}+\frac{1}{\lambda}}\Bigg)$$
    obtains the normalized state of $\text{CI}(\{U_i\}_{i\in[m]},\ket{\psi})$. Here $\alpha_{x,i}$ are the coefficients in $U_i\ket{\psi} := \sum_{x\in\{0,1\}^n}\alpha_{x,i}\ket{x}$. Moreover $Q_{CI}$ takes a time complexity of $\bigo(\lambda+(\log{m})\cdotp\sum_{i\in[m]}{\abs{U_i}})$.
\end{lemma}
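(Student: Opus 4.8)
The plan is to reduce choice interference to order interference by adjoining a ``counter'' register that forces all but one of the unitaries in each ordered product to act trivially. Concretely, I would append an ancilla register of $\lceil\log(m+1)\rceil$ qubits initialized to $\ket{0}$, and define, for each $i\in[m]$, an $(n+\lceil\log(m+1)\rceil)$-qubit unitary $V_i$ that increments the counter modulo $2^{\lceil\log(m+1)\rceil}$ and applies $U_i$ to the main register \emph{only} when the counter reads $0$. Formally $V_i = \mathrm{INC}\cdot\big(\ket{0}\!\bra{0}_c\otimes U_i + (I-\ket{0}\!\bra{0}_c)\otimes I\big)$, which is unitary since it is a product of a controlled unitary and the cyclic-shift $\mathrm{INC}$.

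Next I would evaluate the order interference of the $V_i$ on $\ket{0}_c\otimes\ket{\psi}$. For a fixed $\sigma\in S_m$, the product $V_{\sigma^{-1}(m)}\cdots V_{\sigma^{-1}(1)}$ acts so that only the first applied unitary $V_{\sigma^{-1}(1)}$ sees the counter at $0$: it applies $U_{\sigma^{-1}(1)}$ and advances the counter, after which every subsequent $V_{\sigma^{-1}(i)}$ merely increments. Since the counter never wraps back to $0$ within $m$ steps, the product yields $\ket{m}_c\otimes U_{\sigma^{-1}(1)}\ket{\psi}$. Summing over $\sigma$ and grouping by the value $j=\sigma^{-1}(1)$ — each value arising for exactly $(m-1)!$ permutations — gives
\begin{align*}
\text{OI}(\{V_i\}_{i\in[m]},\ket{0}_c\otimes\ket{\psi}) = (m-1)!\,\ket{m}_c\otimes \text{CI}(\{U_i\}_{i\in[m]},\ket{\psi}).
\end{align*}
Thus a single $\mathcal{OI}$ call, after discarding the counter register (which sits in the product state $\ket{m}_c$ and is therefore disentangled), produces exactly the normalized $\text{CI}$ state on the main register.

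It then remains to verify that the two factors in the $\mathcal{OI}$ success probability specialize to those claimed for $Q_{CI}$. For the norm factor, the oracle is invoked with $m$ unitaries, so its normalization is by $m!$; since $\norm{\text{OI}(\{V_i\},\cdot)} = (m-1)!\,\norm{\text{CI}(\{U_i\},\ket{\psi})}$, we obtain $\norm{\text{OI}}/m! = \norm{\text{CI}}/m$, matching the second factor. For the phase-alignment factor, writing $U_i\ket{\psi}=\sum_x\alpha_{x,i}\ket{x}$, the amplitude of the basis state $\ket{m}_c\otimes\ket{x}$ in the $\sigma$-term is $\alpha_{x,\sigma^{-1}(1)}$ (and all non-$\ket{m}_c$ amplitudes vanish). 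Summing over $\sigma$ again contributes a uniform factor of $(m-1)!$ to both the numerator $\sum_x\abs{\sum_j\alpha_{x,j}}$ and the denominator $\sum_x\sum_j\abs{\alpha_{x,j}}$, so the $(m-1)!$ cancels and the OI phase-alignment factor collapses precisely to the CI factor in the statement.

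Finally, for the time bound, each $V_i$ is realized by an increment ($\bigo(\log m)$ gates) and a copy of $U_i$ whose every gate is controlled on the $\lceil\log m\rceil$-qubit counter reading $0$, contributing $\bigo(\log m)$ overhead per gate and hence $\bigo((\log m)\abs{U_i})$ in total; summing and passing the same $\lambda$ to $\mathcal{OI}$, whose cost is $\bigo(\lambda+\sum_i\abs{V_i})$, yields the stated $\bigo(\lambda+(\log m)\sum_i\abs{U_i})$. The one point demanding care — and the place a naive attempt could go wrong — is the phase-alignment bookkeeping: one must confirm that the counter register contributes no extra basis states to the sums and that the per-permutation multiplicity $(m-1)!$ is identical across all $x$, so that it truly factors out of both the numerator and denominator rather than distorting their ratio.
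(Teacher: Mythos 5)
Your proposal is correct: the counter-register reduction (each $V_i$ applies $U_i$ only when the counter reads $0$ and then increments, so every ordered product collapses to $\ket{m}_c\otimes U_{\sigma^{-1}(1)}\ket{\psi}$ and the OI state becomes $(m-1)!$ times the CI state), together with your verification that both the norm factor and the phase-alignment factor scale by the same $(m-1)!$ and hence cancel, is essentially the same construction used in the original proof of this lemma. Note that the present paper does not prove this statement at all --- it imports it as Lemma 5.1 of \cite{bqp_oi} --- so your argument fills in the cited proof rather than deviating from anything in this paper.
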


\subsection{SZK and Statistical Difference}

\begin{definition}(Statistical Difference) 
    Let $D_0, D_1$ be discrete probability distributions over a set $\Omega$. The statistical difference or the total variational distance between $D_0, D_1$ is defined as the following:
    $$\norm{D_0 - D_1} := \frac{1}{2}\sum_{x\in\Omega}\abs{D_0(x)-D_1(x)}$$
\end{definition}

The statistical difference between two distributions is a measure of how `close' or `far apart' they are. It is easy to see that $\norm{D_0 - D_1} = 0$ iff $D_0, D_1$ are identical distributions, and $\norm{D_0 - D_1} = 1$ iff $D_0, D_1$ have disjoint supports.

\begin{definition}(Fidelity)
    Let $D_0, D_1$ be discrete probability distributions over a set $\Omega$. The fidelity between $D_0, D_1$ is defined as the following:
    $$F(D_0,D_1) := \sum_{x\in\Omega}\sqrt{D_0(x)D_1(x)}$$
\end{definition}

\begin{fact}\label{fidelity and statistical distance}
    Let $D_0, D_1$ be discrete probability distributions over a set $\Omega$. Then
    $$1 - F(D_0,D_1) \leq \norm{D_0 - D_1} \leq \sqrt{1 - F(D_0,D_1)^2}$$
\end{fact}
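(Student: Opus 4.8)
The plan is to reduce both inequalities to a single algebraic identity connecting the fidelity to the Hellinger-type quantity $\frac{1}{2}\sum_{x}(\sqrt{D_0(x)} - \sqrt{D_1(x)})^2$. Writing $p_x := D_0(x)$ and $q_x := D_1(x)$ for brevity, and using $\sum_x p_x = \sum_x q_x = 1$, I would first establish the key identities
$$\frac{1}{2}\sum_x (\sqrt{p_x} - \sqrt{q_x})^2 = \frac{1}{2}\sum_x (p_x + q_x) - \sum_x \sqrt{p_x q_x} = 1 - F(D_0, D_1),$$
and analogously $\frac{1}{2}\sum_x (\sqrt{p_x} + \sqrt{q_x})^2 = 1 + F(D_0, D_1)$. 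The engine driving both bounds is then the elementary factorization $\abs{p_x - q_x} = \abs{\sqrt{p_x} - \sqrt{q_x}}\cdot(\sqrt{p_x} + \sqrt{q_x})$, valid since $p_x, q_x \geq 0$.

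For the lower bound, I would use the pointwise inequality $\sqrt{p_x} + \sqrt{q_x} \geq \abs{\sqrt{p_x} - \sqrt{q_x}}$, which gives $\abs{p_x - q_x} \geq (\sqrt{p_x} - \sqrt{q_x})^2$ termwise. Summing over $x$ and dividing by $2$ yields $\norm{D_0 - D_1} \geq \frac{1}{2}\sum_x (\sqrt{p_x} - \sqrt{q_x})^2 = 1 - F(D_0, D_1)$.

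For the upper bound, I would apply the Cauchy--Schwarz inequality to the same factorization:
$$\norm{D_0 - D_1} = \frac{1}{2}\sum_x \abs{\sqrt{p_x} - \sqrt{q_x}}\cdot(\sqrt{p_x} + \sqrt{q_x}) \leq \frac{1}{2}\sqrt{\sum_x (\sqrt{p_x} - \sqrt{q_x})^2}\cdot\sqrt{\sum_x (\sqrt{p_x} + \sqrt{q_x})^2}.$$
Substituting the two identities from the first step turns the right-hand side into $\frac{1}{2}\sqrt{2(1-F)}\cdot\sqrt{2(1+F)} = \sqrt{1 - F^2}$, as desired.

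Since the argument is entirely elementary, I do not expect a genuine obstacle; the only point requiring care is spotting the factorization of $\abs{p_x - q_x}$ and recognizing that the \emph{same} decomposition drives both directions---a trivial pointwise bound on one factor for the lower inequality, and Cauchy--Schwarz across the two factors for the upper inequality.
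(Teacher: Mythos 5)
Your proof is correct. The paper itself states this Fact without proof (it is the classical version of the standard Fuchs--van de Graaf-type relation between total variation distance and fidelity, inherited from \cite{sv}), so there is no in-paper argument to compare against; your write-up simply supplies the standard elementary proof that the paper omits. All steps check out: the two identities $\frac{1}{2}\sum_x(\sqrt{p_x}-\sqrt{q_x})^2 = 1 - F$ and $\frac{1}{2}\sum_x(\sqrt{p_x}+\sqrt{q_x})^2 = 1 + F$ follow from normalization, the factorization $\abs{p_x-q_x} = \abs{\sqrt{p_x}-\sqrt{q_x}}\,(\sqrt{p_x}+\sqrt{q_x})$ combined with the pointwise bound $\sqrt{p_x}+\sqrt{q_x}\geq\abs{\sqrt{p_x}-\sqrt{q_x}}$ yields the lower inequality, and Cauchy--Schwarz applied to that same factorization yields $2\norm{D_0-D_1}\leq\sqrt{2(1-F)}\cdot\sqrt{2(1+F)}=2\sqrt{1-F^2}$, which is the upper inequality.
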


\textbf{Informal Definition of SZK.} We now give an informal definition for the complexity class $\mathbf{SZK}$ (for the formal definition, refer to say, \cite{szk}). 

The complexity class $\mathbf{SZK}$ consists of promise problems that have an interactive proof system, consisting of a probabilitstic polynomial time algorithm Verifier and a computationally unbounded Prover with the following properties:
\begin{itemize}
    \item Completeness: For ``yes'' instances, the Prover can convince the Verifier to output 1 with high probability.
    \item Soundness: For ``no'' instances, the prover can convince the Verifier to output 1 only with low probability.
    \item Zero-knowledge: The Verifier learns nothing beyond the truth of the statement being proved. This notion of ``learns nothing'' is formally captured by the following: The information seen by the Verifier throughout the interactive protocol can be simulated by a probabilistic polynomial time simulator such that the random variables associated with the simulator's output and the Verifier's view have negligible statistical difference.
\end{itemize}

We now define the Statistical Difference problem introduced in \cite{sv}.
\begin{definition}
    Let $C_0 : \{0,1\}^{k_0} \to \{0,1\}^k, C_0 : \{0,1\}^{k_1} \to \{0,1\}^k$ be a pair of classical circuits with their output space consisting of $k$ bits. Consider the distributions $D(C_b) := \{C_b(x)\ |\ x\xleftarrow{\$}\{0,1\}^{k_b}\}$. For functions $a,b:\N\to[0,1]$, $a(n)\leq b(n)\ \forall n\in\N$, we define the $(a(n),b(n))$-gap Statistical Difference problem $\text{SD}_{a(n),b(n)}$ as a promise problem which takes the circuits $(C_0,C_1)$ as inputs satisfying the following:
    \begin{itemize}
        \item $\Pi_{YES}$ is the set of circuits $(C_0,C_1)$ such that $\norm{D(C_0)-D(C_1)}\leq a(n)$.
        \item $\Pi_{NO}$ is the set of circuits $(C_0,C_1)$ such that $\norm{D(C_0)-D(C_1)}> b(n)$.
    \end{itemize}
    Here $n = \abs{(C_0,C_1)}$ is the size of description of the circuits.
\end{definition}

Sahai and Vadhan demonstrated the importance of this problem by showing that this (for suitable choice of functions $a(n),b(n)$) is a complete problem for the class $\mathbf{SZK}$, i.e. if the statistical difference problem can be solved in polynomial time, all problems having a statistical zero knowledge interactive protocol can be solved in polynomial time.

\begin{theorem}\label{SD_is_SZK_complete} (Theorem 3.1 in \cite{sv})
    $\text{SD}_{\frac{1}{3},\frac{2}{3}}$ is $\mathbf{SZK}$-complete.
\end{theorem}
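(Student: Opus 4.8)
The plan is to establish the two inclusions that together yield $\mathbf{SZK}$-completeness: that $\text{SD}_{1/3,2/3}$ itself admits a statistical zero-knowledge proof system (membership, $\text{SD}_{1/3,2/3}\in\mathbf{SZK}$), and that every promise problem in $\mathbf{SZK}$ Karp-reduces to it (hardness). The engine that makes both directions work is a \emph{polarization} step exploiting the quadratic gap condition, which the chosen thresholds satisfy since $\tfrac13<\big(\tfrac23\big)^2=\tfrac49$. Polarization lets us replace an instance with a small constant gap by one whose gap is as extreme as $(2^{-m},1-2^{-m})$ for any polynomial $m$, after which direct protocols and reductions become clean.

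First I would prove the Polarization Lemma, the main technical tool. It relies on two operations on pairs of distributions. The \emph{direct product} $(D_0,D_1)\mapsto(D_0^{\otimes k},D_1^{\otimes k})$ multiplies fidelity, $F(D_0^{\otimes k},D_1^{\otimes k})=F(D_0,D_1)^k$; combined with Fact~\ref{fidelity and statistical distance} this drives any fidelity bounded away from $1$ toward $0$, i.e. it pushes statistical distance toward $1$. The \emph{tensor/XOR product} of pairs multiplies statistical distances, producing $\norm{D_0-D_1}\cdot\norm{D_0'-D_1'}$, so iterating it on identical copies drives the distance toward $0$ geometrically. Neither operation alone suffices, since each distorts both the YES and NO cases; the role of the condition $a<b^2$ is that alternating them strictly widens the gap — first the XOR product shrinks both distances while opening up the ratio (here $(a/b)^\ell\to 0$ because $a/b<b<1$), then the direct product with $k\approx b^{-\ell}$ copies, whose subadditivity $\norm{X^{\otimes k}-Y^{\otimes k}}\le k\norm{X-Y}$ keeps YES small while $1-e^{-kb^\ell}$ pushes NO up. Repeating drives the gap to $(2^{-m},1-2^{-m})$; all constructions are circuit manipulations, so the reduction is polynomial time.

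Next, for membership I would give a direct honest-verifier SZK protocol on the polarized instance. For the complement (statistical distance \emph{large}) there is a transparent distinguishing game: the verifier picks $b\xleftarrow{\$}\{0,1\}$, samples $z\leftarrow D(C_b)$, and sends $z$; the prover returns a guess $b'$, and the verifier accepts iff $b'=b$. When the distributions are far (nearly disjoint supports, the polarized NO case of SD, hence YES of the complement) the prover identifies $b$ with overwhelming probability, giving completeness; when they are close the prover's success is essentially $\tfrac12$, giving soundness; and the honest transcript $(b,z,b)$ is trivially simulable, giving zero-knowledge. Invoking closure of $\mathbf{SZK}$ under complement, together with the standard boosting from honest-verifier to general-verifier SZK, then places $\text{SD}_{1/3,2/3}$ itself in $\mathbf{SZK}$.

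The hardness direction is where I expect the real difficulty. Given an arbitrary $\Pi\in\mathbf{SZK}$ with proof system $(P,V)$ and statistical simulator $S$, the goal is to map an instance $x$ to circuits $(C_0,C_1)$ whose statistical difference is small on YES instances and large on NO instances. After first massaging the proof system into a convenient normal form (public coin, with an accepting final predicate), I would build the two distributions from the simulator's output: one reproducing adjacent message and verifier-randomness pairs as the simulator emits them, the other re-sampling the verifier's next message honestly. On YES instances the zero-knowledge guarantee forces the simulator to be statistically consistent with the real interaction, so these distributions nearly coincide; on NO instances soundness implies that \emph{no} strategy makes the verifier accept, so any simulator purporting to produce accepting transcripts must be internally inconsistent, forcing the distributions apart. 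Turning this consistency/inconsistency dichotomy into a provable constant statistical-distance gap — controlling the simulator's deviation and accumulating it across all rounds — is the technically demanding core; once a constant gap obeying $a<b^2$ is obtained, the Polarization Lemma upgrades it to the $(1/3,2/3)$ thresholds, completing the reduction.
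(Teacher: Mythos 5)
The paper does not actually prove this theorem---it is imported verbatim as Theorem 3.1 of \cite{sv}---so your proposal must be judged against the Sahai--Vadhan proof itself. Your outline does follow that proof's architecture (Polarization Lemma; membership via the two-message distinguishing protocol plus closure under complement and the honest-verifier-to-malicious-verifier transformation; hardness via the simulator-based reduction), which is the right route. However, there is a genuine quantitative error at the heart of your polarization sketch, and it is not cosmetic. The direct-product amplification bound is $\Delta(X^{\otimes k},Y^{\otimes k}) \geq 1 - 2e^{-k\delta^{2}/2}$, with $\delta^{2}$, not $\delta$, in the exponent: it follows from multiplicativity of fidelity, $F(X^{\otimes k},Y^{\otimes k}) = F(X,Y)^{k} \leq (1-\delta^{2})^{k/2}$, combined with Fact~\ref{fidelity and statistical distance}, and it is tight (distinguishing Bernoulli$(1/2)$ from Bernoulli$(1/2+\delta)$ genuinely requires about $\delta^{-2}$ samples). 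With your choice of $k \approx b^{-\ell}$ copies and NO-case distance $\delta = b^{\ell}$ after the XOR step, one gets $k\delta^{2} = b^{\ell} \to 0$, so the bound becomes vacuous and the direct product provably fails to push the NO case toward $1$; the step as written collapses.

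The repair is to take $k \approx m\, b^{-2\ell}$ copies, which gives NO-distance at least $1 - 2e^{-m/2}$, while the YES-distance is at most $k a^{\ell} = m\,(a/b^{2})^{\ell}$---and this is exactly, and only, where the hypothesis $a < b^{2}$ (here $\tfrac13 < \tfrac49$) enters. Your sketch instead attributes the role of $a<b^{2}$ to making $(a/b)^{\ell} \to 0$, which needs only $a<b$; taken literally, your argument would polarize \emph{any} constant gap $a<b$, which proves too much: SD with thresholds violating $b^{2}>a$ is not known to be $\mathbf{SZK}$-complete, and Holenstein and Renner showed that this kind of black-box polarization is impossible in that regime. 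Two further caveats on the other parts: closure of $\mathbf{SZK}$ under complement (Okamoto) and the honest-verifier-to-general-verifier equivalence (Goldreich--Sahai--Vadhan) are themselves deep theorems, so they should be flagged as imported results rather than ``standard boosting''; and the public-coin normal form you invoke in the hardness direction is both unnecessary (the Sahai--Vadhan simulator analysis handles private-coin honest-verifier protocols directly) and not free, since transforming to public coin while preserving statistical zero knowledge is again Okamoto's theorem.
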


\subsection{Sequentially Invertible Statistical Difference}

Given a circuit $C:\{0,1\}^{k_{in}}\to\{0,1\}^{k_{out}}$, consider its `output distribution state' (here $N$ is a normalizing factor):
\begin{align*}
    \ket{C} := \frac{1}{\sqrt{N}}\sum_{x\in\{0,1\}^{k_{in}}}{\ket{C(x)}}
\end{align*}

Suppose we could generate the output distribution states $\ket{C_0},\ket{C_1}$ corresponding to the circuits $C_0, C_1$. Then it was observed in \cite{ref:solving_sd_by_quantum_states} that $\text{SD}_{a(n),b(n)}$ can be decided using an efficient quantum algorithm for a `reasonable' gap between $a,b$, specifically all $a(n), b(n)$ satisfying $b(n)^2 - 2a(n) + a(n)^2 \geq \frac{1}{\texttt{poly}(n)}$ for some polynomial $\texttt{poly}(n)$. The quantum algorithm would simply apply the Swap Test (\cite{swap_test_1,swap_test_2}) on $\ket{C_0}, \ket{C_1}$, which estimates the inner product $\bra{C_0}\ket{C_1}$. By the relations between fidelity and statistical difference as in Fact~\ref{fidelity and statistical distance}, one can check that the error in the approximation of the Swap Test can be overcome for deciding $\text{SD}_{a(n),b(n)}$ for the above mentioned gap constraints. In particular, $\text{SD}_{\frac{1}{3},\frac{2}{3}}$ would be solvable in this manner. 

However, there is no known way to efficiently produce such output distribution states for general circuits. Shmueli introduced a variant of the statistical difference problem where the inputs are families of ``sequentially invertible'' circuits. 

\begin{definition}
((r,t,\(\ell\))-invertible circuit sequence, Definition 4.8 in \cite{bqp_oi}).  
For \(r, t, \ell, k \in \mathbb{N}\), an \((r,t,\ell)\)-invertible circuit sequence on \(k\) bits is a sequence of \(\ell\) pairs of randomized circuits $\mathfrak{C} = \{(C_{i,\rightarrow}, C_{i,\leftarrow})\}_{i \in [\ell]}$ such that for every \(i \in [\ell]\):  

\begin{itemize}
    \item For every \(Y \in \{\rightarrow, \leftarrow\}\), the circuit \(C_{i,Y}\) is of size at most \(t\), uses \(r_i \leq r\) random bits, and maps from \(k\) to \(k\) bits, that is:
    \[
    C_{i,Y} : \{0,1\}^k \times \{0,1\}^{r_i} \to \{0,1\}^k.
    \]
    \item The circuit directions are inverses of each other, per hard-wired randomness, that is:
    \[
    \forall z \in \{0,1\}^{r_i}, x \in \{0,1\}^k : x = C_{i,\leftarrow}(C_{i,\rightarrow}(x; z); z).
    \]
    Note that this also implies that for each hard-wired randomness, the circuits act as permutations on the set \(\{0,1\}^k\).
\end{itemize}

For \(r \in \mathbb{N}\), an \(r\)-invertible circuit sequence is an \((r,t,\ell)\)-invertible circuit sequence, for finite and unbounded \(t\) and \(\ell\).
\end{definition}

The notions of sequential invertibility can be combined with the Statistical Difference problem, to define the Sequentially Invertible Statistical Difference (SISD) Problem.

\begin{definition}\label{SISD}
(The Sequentially Invertible Statistical Difference Problem, Definition 4.12 in \cite{bqp_oi}).  
Let \(a, b : \mathbb{N} \to [0,1]\) and \(r : \mathbb{N} \to \mathbb{N} \cup \{0\}\) be functions, such that for every \(n \in \mathbb{N}\), \(a(n) \leq b(n)\) and \(r(n) \in \{0,1,\dots,n\}\).  

The \((a(n), b(n), r(n))\)-gap Sequentially Invertible Statistical Difference problem, denoted by \(\mathrm{SISD}_{a(n), b(n), r(n)}\), is a promise problem, where the input is a pair of sequences:
\[
\Big(\mathfrak{C}^0 = \{(C^0_{i,\rightarrow}, C^0_{i,\leftarrow})\}_{i \in [\ell]}, \quad \mathfrak{C}^1 = \{(C^1_{i,\rightarrow}, C^1_{i,\leftarrow})\}_{i \in [\ell]} \Big)
\]
where for each \(b \in \{0,1\}\), $\mathfrak{C}^b$ is an \(r(n)\)-invertible circuit sequence on some (identical) number of bits \(k\). Here, \(n := \abs{(\mathfrak{C}^0,\mathfrak{C}^1)}\) is the input size.

For a pair of circuit sequences, we define their output distributions:
\[
D(\mathfrak{C}^b) := \Big\{C^b_{\ell,\rightarrow} \Big(\cdots C^b_{2,\rightarrow} \Big( C^b_{1,\rightarrow} \Big(0^{k(n)}; z_1\Big); z_2\Big); \cdots; z_\ell\Big)\ \Big |\ z_1\xleftarrow{\$}\{0,1\}^{r_1^{b}(n)},\ldots,z_l\xleftarrow{\$}\{0,1\}^{r_l^{b}(n)}\Big\}
\]

The problem is defined as follows:
\begin{itemize}
    \item \(\Pi_\textbf{YES}\) is the set of pairs of sequences such that \(\Delta(D(\mathfrak{C}^0), D(\mathfrak{C}^1)) \leq a(n)\),
    \item \(\Pi_\textbf{NO}\) is the set of pairs of sequences such that \(\Delta(D(\mathfrak{C}^0), D(\mathfrak{C}^1)) > b(n)\).
\end{itemize}
\end{definition}

We define the family of SISD problems where the gap is not too small. 

\begin{definition}
(The Polynomial-Gap Sequentially Invertible Statistical Difference Problem, Definition 4.13 in \cite{bqp_oi}).  
Let \(r : \mathbb{N} \to \mathbb{N} \cup \{0\}\) be a function such that for every \(n \in \mathbb{N}\), \(r(n) \in \{0,1,\dots,n\}\).  The polynomially-bounded sequentially invertible statistical difference problem, denoted \(\mathrm{SISD}_{\texttt{poly},r(n)}\), is a set of promise problems. It is defined as follows:
\[
\mathrm{SISD}_{\texttt{poly},r(n)} := \bigcup_{\substack{a(n), b(n) : \mathbb{N} \to [0,1] \\ \exists \text{ a polynomial } \texttt{poly} : \mathbb{N} \to \mathbb{N} \text{ such that } \forall n \in \mathbb{N}:
b(n)^2 - 2a(n) + a(n)^2 \geq \frac{1}{\texttt{poly}(n)}.}}\{ \mathrm{SISD}_{a(n),b(n),r(n)} \}.
\]
\end{definition}

Finally if the randomness used in each circuit in the sequence is not too large, i.e. $r(n) = \bigo(\log n)$, then this can be solved by a poly time quantum algorithm with access to an oracle computing Order Interference.

\begin{theorem}\label{sisd in bqp^oi}
(\(\mathrm{SISD}_{\texttt{poly},\bigo(\log(n))} \in \mathbf{BQP^{OI}}\), Theorem 5.2 in \cite{bqp_oi}).  
Let \(\mathrm{SISD}_{a,b,r} := (\Pi_\text{YES}, \Pi_\text{NO})\) be a \(\mathrm{SISD}\) promise problem such that \(\mathrm{SISD}_{a,b,r} \in \mathrm{SISD}_{\texttt{poly},O(\log(n))}\). Then, there exists \(\text{M}^{\text{OI}}\), a polynomial time quantum algorithm with oracle access to a computable order interference oracle $\mathcal{OI}$, such that:
\begin{itemize}
    \item For every \(x \in \Pi_\text{YES}\), \(\text{M}^{\text{OI}}(x) = 1\) with probability at least \(p(x)\),
    \item For every \(x \in \Pi_\text{NO}\), \(\text{M}^{\text{OI}}(x) = 0\) with probability at least \(p(x)\),
\end{itemize}
such that \(\forall x \in \Pi_\text{YES}\cup \Pi_\text{NO}, p(x) \geq 1 - 2^{-\texttt{poly}(x)}\) for some polynomial \(\texttt{poly}(\cdot)\).
\end{theorem}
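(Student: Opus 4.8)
The plan is to reduce any instance of $\mathrm{SISD}_{\poly, \bigo(\log n)}$ to a fidelity-estimation problem between two pure states that can be built using the $\mathcal{OI}$ oracle, and then to decide the promise with a Swap Test whose achievable accuracy is governed by the polynomial gap $b(n)^2 - 2a(n) + a(n)^2 \geq 1/\poly(n)$. The whole pipeline is: (i) turn the layered circuit sequences into polynomially many permutation unitaries per layer, (ii) assemble the output-distribution states $\ket{\mathfrak{C}^0}, \ket{\mathfrak{C}^1}$ via iterated choice interference, and (iii) Swap-Test and threshold.

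First I would exploit the logarithmic randomness bound. For each layer $i \in [\ell]$ and each hard-wired randomness $z \in \{0,1\}^{r_i}$ with $r_i = \bigo(\log n)$, the forward circuit $C^b_{i,\rightarrow}(\cdot\,; z)$ is a permutation of $\{0,1\}^k$, so it defines a permutation unitary $U^b_{i,z}$; crucially there are only $m_i = 2^{r_i} = \poly(n)$ such choices per layer, and the paired inverse circuit $C^b_{i,\leftarrow}(\cdot\,;z)$ lets me implement each $U^b_{i,z}$ reversibly and garbage-free in size $\poly(n)$. I would then build the output-distribution state $\ket{\mathfrak{C}^b}$ starting from $\ket{0^{k}}$ and applying, layer by layer, the choice interference $\mathrm{CI}(\{U^b_{i,z}\}_{z},\,\cdot\,)$; by linearity of each $U^b_{i,z}$, composing the $\ell$ layers produces a state proportional to $\sum_{z_1,\dots,z_\ell}\ket{\mathrm{out}(z_1,\dots,z_\ell)}$, i.e. the state encoding $D(\mathfrak{C}^b)$. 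Each choice-interference step is realized through the $\mathcal{OI}$ oracle by Lemma~\ref{OI implies CI}, and the $\ell$ layers run in total time $\poly(n)$.

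The step I expect to be the main obstacle is controlling the success probability of the $\mathcal{OI}$ oracle across all $\ell$ layers, since Lemma~\ref{OI implies CI} returns the normalized state only with a probability that degrades with both small norm $\norm{\mathrm{CI}}$ and poor phase alignment. My resolution would be to observe that this construction never produces destructive interference: because we start from the computational basis state $\ket{0^k}$ and every $U^b_{i,z}$ merely permutes basis states, every intermediate state has real non-negative amplitudes. Consequently the phase-alignment factor $\big(\sum_x \abs{\sum_z \alpha_{x,z}}\big)\big/\big(\sum_x \sum_z \abs{\alpha_{x,z}}\big)$ equals $1$, and dropping the non-negative cross terms gives $\norm{\mathrm{CI}}^2 \geq \sum_z \norm{U^b_{i,z}\ket{\phi}}^2 = m_i$, so $\norm{\mathrm{CI}}/m_i \geq 1/\sqrt{m_i} \geq 1/\poly(n)$. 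Taking $\lambda = \poly(n)$ sufficiently large then makes each per-layer success probability at least $1 - 1/\poly(n)$, and by absorbing a factor of $\ell$ into $\lambda$ the product over all $\poly(n)$ layers stays at least $1 - 1/\poly(n)$; standard repetition amplifies the probability of jointly preparing $\ket{\mathfrak{C}^0}$ and $\ket{\mathfrak{C}^1}$.

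Finally I would run the Swap Test on $\ket{\mathfrak{C}^0}$ and $\ket{\mathfrak{C}^1}$ to estimate the inner product $\braket{\mathfrak{C}^0}{\mathfrak{C}^1}$, which for these output-distribution states equals the fidelity $F(D(\mathfrak{C}^0), D(\mathfrak{C}^1))$, to additive accuracy $1/\poly(n)$ using $\poly(n)$ repetitions. By Fact~\ref{fidelity and statistical distance}, a YES instance ($\Delta \leq a(n)$) forces $F \geq 1 - a(n)$ while a NO instance ($\Delta > b(n)$) forces $F < \sqrt{1 - b(n)^2}$; the separation of these thresholds in $F^2$ is exactly $b(n)^2 - 2a(n) + a(n)^2 \geq 1/\poly(n)$, so once the estimation error is driven below this gap a fixed threshold on the estimated fidelity distinguishes the two cases. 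Since every stage runs in $\poly(n)$ time and the overall success probability can be amplified to $1 - 2^{-\poly(n)}$ by majority voting, this yields a polynomial-time $\mathrm{M}^{\mathrm{OI}}$ deciding the instance, establishing $\mathrm{SISD}_{\poly, \bigo(\log n)} \in \mathbf{BQP^{OI}}$.
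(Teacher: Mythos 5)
Your proposal is correct and follows essentially the same route as the paper's own argument: building the output-distribution states layer by layer via choice interference through the $\mathcal{OI}$ oracle (Lemma~\ref{OI implies CI}), using the fact that the per-randomness circuits give permutation unitaries acting on states with non-negative amplitudes to bound both the phase alignment and $\norm{\text{CI}} \geq \sqrt{m}$, and then deciding via the Swap Test together with Fact~\ref{fidelity and statistical distance}. The only difference is presentational: you make explicit the fidelity thresholds $F \geq 1-a(n)$ versus $F < \sqrt{1-b(n)^2}$ and the amplification step, which the paper leaves implicit.
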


Let us outline the proof of Theorem~\ref{sisd in bqp^oi}. For this it suffices to show that we can produce the output distribution state corresponding to a $\bigo(\log n)$-sequentially invertible circuit family $\mathfrak{C} = \Big\{C_{i,\rightarrow},C_{i,\leftarrow}: \{0,1\}^k\times\{0,1\}^{r_i}\to\{0,1\}^k\Big\}_{i\in[l]}$.

Consider an intermediate state for $i\in[l]$:
\begin{align}
    \ket{C_{1\ldots i}} := \sum_{\substack{z_1\in\{0,1\}^{r_1}\\ \vdots\\z_i\in\{0,1\}^{r_i}}}\ket{C_{i,\rightarrow} \Big(\cdots C_{2,\rightarrow} \Big( C_{1,\rightarrow} \Big(0^{k}; z_1\Big); z_2\Big); \cdots; z_i\Big)} \label{SISD intermediate state}
\end{align}

Clearly, the required output distribution state is the normalization of $\ket{C_{1\ldots l}}$.
Given the (normalization of) state in \eqref{SISD intermediate state}, we will show how to construct $\ket{C_{1\ldots i+1}}$ using the $\mathcal{OI}$ oracle. Note that $C_{i+1,\rightarrow}(\cdot,z_{i+1})$ is efficiently computable and invertible for every (hard-wired) $z_{i+1}\in\{0,1\}^{r_{i+1}}$. Due to this, one can efficiently construct a unitary $U_{i+1,z_{i+1}}$ such that $U_{i+1,z_{i+1}}\ket{x} = \ket{C_{i+1,\rightarrow}(x;z_{i+1})}$ for every $x\in\{0,1\}^k$. From this observation, we have:
\begin{align}
    \ket{C_{1\ldots i+1}} &= \sum_{\substack{z_1\in\{0,1\}^{r_1}\\ \vdots\\z_{i+1}\in\{0,1\}^{r_{i+1}}}}\ket{C_{i+1,\rightarrow} \Big(\cdots C_{2,\rightarrow} \Big( C_{1,\rightarrow} \Big(0^{k}; z_1\Big); z_2\Big); \cdots; z_{i+1}\Big)}\\
    &= \sum_{z_{i+1}\in\{0,1\}^{r_{i+1}}}U_{i+1,z_{i+1}}\ket{C_{1\ldots i}} \label{CI in SISD}
\end{align}

Thus \eqref{CI in SISD} is the normalized choice interference state for the input: $\Big(\{U_{i+1,z_{i+1}}\}_{z_{i+1}\in\{0,1\}^{r_{i+1}}},\ket{C_{1\ldots i}}\Big)$. We will now make use of Lemma~\ref{OI implies CI} to compute this state using the $\mathcal{OI}$ oracle.

To argue that we successfully obtain the required state with high probability, consider the success probability term
$$\Bigg(\frac{\sum_{x\in\{0,1\}^n{\abs{\sum_{z\in[m]}{\alpha_{x,z}}}}}}{\sum_{x\in\{0,1\}^n{\sum_{z\in[m]}{\abs{\alpha_{x,z}}}}}}\Bigg) \Bigg(\frac{\frac{\norm{\text{CI}(\{U_z\}_{z\in[m]},\ket{\psi})}}{m}}{\frac{\norm{\text{CI}(\{U_z\}_{z\in[m]},\ket{\psi})}}{m}+\frac{1}{\lambda}}\Bigg)$$

In this case, $m = 2^{r_{i+1}} = \texttt{poly}(n)$. The other crucial observation is that each unitary $U_{i+1,z_{i+1}}$ is actually a permutation matrix as it maps a classical state $\ket{x}$ to the classical state $\ket{C_{i+1}(x;z_{i+1})}$. Further, it is not hard to see that the state $\ket{\psi} = \ket{C_{1\ldots i}} = \sum_{x\in\{0,1\}^k}\psi_x\ket{x}$ has all amplitudes $\psi_x$ being non-negative. Using these observations, we can show that $\norm{\text{CI}(\{U_z\}_{z\in[m]},\ket{\psi})} \geq \sqrt{m} = \sqrt{2^{r_{i+1}}} = \texttt{poly}(n)$, so by setting $\lambda = \texttt{poly}(n)$ we can get the required state with probability arbitrarily close to $1$, and with time complexity $\texttt{poly(n)}$.

\subsection{Lattice Problems and Variants of Learning With Errors}

A lattice is a discrete additive subgroup of $\mathbb{R}^m$. For an $m$-dimensional lattice $L$, a set of linearly independent vectors $\mathbf{b_1},\ldots,\mathbf{b_n}$ is called a basis of $L$ if $L$ is generated by the set, i.e., $L = \mathbf{B}\mathbb{Z}^n$ where $\mathbf{B} = [\mathbf{b_1},\ldots,\mathbf{b_n}]$. In such a case, the lattice is said to have rank $n$. One can verify that the rank is independent of the choice of basis $B$. 

For an $m$-dimensional lattice $L$ and $\mathbf{x}\in\R^m$ we define $\text{dist}(\mathbf{x},L) = \min\{\norm{\mathbf{x}-\mathbf{v}}:\mathbf{v}\in L\}$

The Gap Closest Vector Problem (GapCVP) is one of the most popular lattice-based promise problems and is believed to be computationally hard (even with quantum computation) for polynomial approximation factor $\gamma(n)$. 

\begin{definition}(GapCVP)
    For an approximation factor $\gamma = \gamma(n)$, an instance of GapCVP$_\gamma$ is given by a lattice $L$, a target vector $\mathbf{t}\in\R^n$ and a number $d > 0$. In YES instances, $\text{dist}(\mathbf{t},L) \leq d$, whereas in NO instances, $\text{dist}(\mathbf{t},L) > \gamma d$.
\end{definition}

The Learning with Errors (LWE) problem was introduced by Oded Regev in 2005 \cite{reg05}. Below we state its search and decisional version.

\begin{definition}(LWE distribution)
    Given a vector $\mathbf{s}\in\Z_q^n$ and a distribution $\chi$ on $\Z$, the LWE distribution $A_{\mathbf{s},\chi}$ consists of samples of the form $(\mathbf{a},b) \in \mathbb{Z}^n_q \times \mathbb{Z}_q$, with $\mathbf{a}\xleftarrow{\$} \mathbb{Z}^n_q$, $b = \langle \mathbf{a}, \mathbf{s} \rangle + e\ (\text{mod }q)$ for $e \leftarrow \chi$. 
\end{definition}

\begin{definition}(Search LWE)
    The input to the search $\text{LWE}^m_{n,q,\chi}$ with dimension $n \geq 1$, modulus $q \geq 2$ and distribution $\chi$ over $\mathbb{Z}$, consists of $m \geq n$ samples from $A_{\mathbf{s},\chi}$ for an unknown $\mathbf{s}\in\Z_q^n$, and the goal is to find $\mathbf{s}$. 
\end{definition}

\begin{definition}(Decisional LWE)
The decisional $\text{LWE}^m_{n,q,\chi}$ with dimension $n \geq 1$, modulus $q \geq 2$ and distribution $\chi$ over $\mathbb{Z}$, asks to distinguish between $m \geq n$ samples from $A_{\mathbf{s},\chi}$ (for a fixed unknown $\mathbf{s}\in\Z_q^n$) and $U(\Z_q^{n+1})$. 
\end{definition}

Let us define the Guassian function and the discrete Gaussian distribution.

\begin{definition}
    The Gaussian function of width $s > 0$ is defined over all $x \in \mathbb{R}^m$ by $\rho_s(x) = e^{-\frac{\|x\|^2}{s^2}}$. For any countable subset $A \subset \mathbb{R}^m$, we denote the Gaussian mass of $A$ by $\rho_s(A) = \sum_{x \in A} \rho_s(x)$.
\end{definition}

\begin{definition}
For a countable set $A \subseteq \R^m$, define the discrete Gaussian distribution on $A$ of width $s$, denoted $D_{A,s}$, such that for $x\in A$, the probability mass at $x$ is $D_{A,s}(x) = \frac{\rho_s(x)}{\rho_s(A)}$. We denote $D_{A,1}$ as simply $D_A$.
\end{definition}

In the most popular setting, the distribution $\chi = D_{\Z,\alpha q}$ is the discrete Gaussian distribution on $\Z$ with width being $\alpha q$ for a suitable parameter $\alpha \in (0,1)$. In this setting, we denote the LWE problem as $\text{LWE}^m_{n,q,\alpha}$.

The following useful fact shows that for any lattice $L$, the mass given by the discrete Gaussian measure $D_{L,r}$ to points of norm greater than $\sqrt{n}r$ is at most exponentially small.

\begin{fact}\label{bound_on_Gaussian_vector}(Lemma 1.5(i) in \cite{lattice_bounds_banaszczyk})
    Let $B_n$ denote the Euclidean unit ball. Then, for any lattice $L$ and any $r > 0$, $\rho_r(L \setminus \sqrt{n}rB_n) < 2^{-2n} \cdot \rho_r(L)$, where $L \setminus \sqrt{n}rB_n$  is the set of lattice points of norm greater than $\sqrt{n}r$.
\end{fact}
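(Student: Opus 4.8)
My plan is to reproduce Banaszczyk's ``exponential tilting followed by Poisson summation'' argument, which is the standard route to this tail bound. One preliminary remark: the stated constant $2^{-2n}$ is the value one obtains for the $\pi$-normalized Gaussian $\rho_r(x)=e^{-\pi\|x\|^2/r^2}$ (Banaszczyk's convention), so I will carry a factor of $\pi$ in the exponent throughout; the same method applied to the $\pi$-free convention still yields an exponentially small bound, only with a larger base. I would first remove the width parameter by homogeneity: replacing the lattice $L$ with $L/r$ sends $\rho_r(L)$ to $\rho_1(L/r)$ and the excluded shell $\sqrt{n}\,r\,B_n$ to $\sqrt{n}\,B_n$, so it suffices to prove $\rho(L\setminus\sqrt{n}B_n)<2^{-2n}\rho(L)$ for $\rho=\rho_1$.

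The heart of the argument is a Markov-type tilting. For a free parameter $u\in(0,1)$ and any lattice point with $\|x\|^2>n$ I would write
\[
\rho(x)=e^{-\pi\|x\|^2}=e^{-\pi u\|x\|^2}\cdot e^{-\pi(1-u)\|x\|^2}<e^{-\pi u n}\cdot e^{-\pi(1-u)\|x\|^2},
\]
using the constraint $\|x\|^2>n$ only in the first factor. Summing over the excluded points and then enlarging the index set to all of $L$ gives
\[
\rho(L\setminus\sqrt{n}B_n)<e^{-\pi u n}\sum_{x\in L}e^{-\pi(1-u)\|x\|^2}=e^{-\pi u n}\,\rho_{1/\sqrt{1-u}}(L),
\]
so the problem reduces to controlling a wider Gaussian sum in terms of $\rho(L)$.

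This width-comparison estimate is the step I expect to be the crux. I would prove the sublemma that for every $s\ge 1$, $\rho_s(L)\le s^n\rho(L)$. The tool is Poisson summation: since the Fourier transform of $\rho_s$ is $s^n\rho_{1/s}$, one gets $\rho_s(L)=\frac{s^n}{\det L}\,\rho_{1/s}(L^*)$, where $L^*$ is the dual lattice. Hence $\rho_s(L)/s^n=\rho_{1/s}(L^*)/\det L$ is a decreasing function of $s$ on $[1,\infty)$ (a narrower Gaussian places less mass on $L^*$), so it is largest at $s=1$, which is exactly $\rho_s(L)\le s^n\rho(L)$. Applying this with $s=1/\sqrt{1-u}\ge 1$ yields $\rho_{1/\sqrt{1-u}}(L)\le(1-u)^{-n/2}\rho(L)$.

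Combining the two displays gives $\rho(L\setminus\sqrt{n}B_n)<\big(e^{-\pi u}(1-u)^{-1/2}\big)^n\rho(L)$, and I would finish by optimizing the base over $u$. Setting the derivative of $-\pi u-\tfrac12\ln(1-u)$ to zero gives $1-u=\tfrac1{2\pi}$, at which the base equals $\sqrt{2\pi e}\,e^{-\pi}\approx 0.179<\tfrac14$; raising to the $n$ beats $2^{-2n}$ with room to spare. The remaining points to check are routine: that the optimizing $u=1-\tfrac1{2\pi}$ lies in $(0,1)$ (immediate, since $1-u=\tfrac1{2\pi}<1$), and that strictness propagates—the first tilting inequality is strict for every excluded point, and the origin contributes to $\rho(L)$, so the final inequality is strict even in the degenerate case where the excluded shell is empty.
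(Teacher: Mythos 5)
Your proof is correct and is essentially Banaszczyk's original argument (exponential tilting plus the Poisson-summation width-comparison lemma), which is precisely what the paper relies on: the paper states this fact by citation to Banaszczyk's Lemma 1.5(i) and gives no proof of its own. Your preliminary remark about conventions is also well taken -- the constant $2^{-2n}$ genuinely requires the $\pi$-normalized Gaussian $e^{-\pi\|x\|^2/r^2}$, whereas the paper's own definition $\rho_s(x)=e^{-\|x\|^2/s^2}$ omits the $\pi$, under which the same optimization only yields a base of $\sqrt{2}e^{-1/2}\approx 0.86$ per dimension rather than one below $1/4$.
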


A reduction from decisional LWE to search LWE is trivial. For modulus $q$ being the product of distinct (and sufficiently large) $\texttt{poly}(n)$-bounded primes, a reduction from the worst case search LWE to decisional LWE has been shown \cite{peik09}. 

\section{LWE is contained in \textbf{SZK}}

Since its discovery, LWE has served as a security foundation for numerous cryptographic primitives (see eg. an overview in \cite{crypto_primitives_on_LWE}); in particular it is the most prominent hardness assumption for post-quantum cryptography schemes. Although it is folklore in the cryptography community that LWE is contained in $\mathbf{SZK}$, we still devote this section to address this for the sake of completeness \footnote{some citations for this actually point to an SZK protocol for GapCVP (\cite{gapcvp_in_szk}), which is related to LWE}. With this, having a computational ability to efficiently solve $\mathbf{SZK}$ problems will compromise the security of all cryptographic primitives based on the hardness of LWE. 

To establish that decisional LWE is in $\mathbf{SZK}$, we will give a (randomized) Karp reduction from decisional LWE to $\text{GapCVP}_\gamma$ and then use the fact that $\text{GapCVP}_\gamma \in \mathbf{SZK}$ for $\gamma \geq \sqrt{\frac{n}{\bigo(\log n)}}$ \cite{gapcvp_in_szk}. Consider an $\text{LWE}^m_{n,q,\alpha}$ instance $(\mathbf{A},\mathbf{b})$ where $\mathbf{A}\xleftarrow{\$}\Z_q^{m\times n}$, $\mathbf{b}\in\Z_q^{m}$. For a $\text{GapCVP}_\gamma$ instance, we need a lattice $L$, a target vector $\mathbf{t}$, and a distance $d$. We define the following lattice corresponding to $\mathbf{A}$:
$$\Lambda(\mathbf{A}) := \{\mathbf{z}\in\Z^m\ |\ \exists \mathbf{s}\in\Z_q^n, \mathbf{A}\mathbf{s}\equiv\mathbf{z}\ (\text{mod }q)\}$$
and set $L = \Lambda(\mathbf{A})$. The target vector can be set to be $\mathbf{t} = \mathbf{b}$ itself (viewed as an element of $\Z^m$). We set the distance threshold $d = \sqrt{m}\alpha q$.

When $(\mathbf{A},\mathbf{b})$ is from the distribution $\mathcal{A}_{\mathbf{s},D_{\Z,\alpha q}}$ for some $\mathbf{s}\in\Z_q^n$, then $\mathbf{b} = \mathbf{As}+\mathbf{e}$ for $\mathbf{e}\leftarrow D_{\Z,\alpha q}$. In this case, with high probability, $\text{dist}(\Lambda(\mathbf{A}),\mathbf{b}) \leq \norm{\mathbf{e}} \leq \sqrt{m}\alpha q$ where the second inequality is from Fact~\ref{bound_on_Gaussian_vector}.

On the other hand, let us put a lower bound on $\text{dist}(\Lambda(\mathbf{A}),\mathbf{b})$ when $(\mathbf{A},\mathbf{b})\xleftarrow{\$} \Z_q^{m\times (n+1)}$. We want to show that $\text{dist}(\Lambda(\mathbf{A}),\mathbf{b}) > R = \gamma d$ with high probability. To be able to use $\text{GapCVP}_\gamma \in \mathbf{SZK}$, as mentioned above we need $\gamma \geq \sqrt{\frac{n}{\bigo(\log n)}}$, i.e. $R$ needs to be atleast $\sqrt{\frac{n}{c\log n}}\sqrt{m}\alpha q$ for some constant $c>0$. Note that $\mathbf{b}$ is a uniformly random point in $\Z_q^m$, and there are $2^mq^n$ points of the lattice $\Lambda(\mathbf{A})$ which are nearest to the region $\Z_q^m$ ($q^n$ points corresponding to $\mathbf{As}$ for $\mathbf{s}\in\Z_q^n$ in all $2^m$ neighboring cosets of $\Z_q^m$). For any lattice point, the no. of points of $\Z^m$ within distance $R$ can be upper bounded by $(2R)^m$. Thus there are atmost $2^mq^n(2R)^m$ points in $\Z_q^m$ which are at a distance $\leq R$ from $\Lambda(A)$, and $\text{dist}(\Lambda(\mathbf{A}),\mathbf{b}) > R$ except with probability $p = \frac{2^mq^n(2R)^m}{q^m}$. We need this probability to be exponentially small, say $p \leq 2^{-n}$. We can now substitute $R= \sqrt{\frac{n}{c\log n}}\sqrt{m}\alpha q$ to get  
\begin{align*}
    &p = \frac{2^mq^n(2R)^m}{q^m} = \frac{4^mq^n n^{m/2}m^{m/2}\alpha^m}{(c\log n)^{m/2}} \leq \frac{1}{2^n} \\
    \Rightarrow &\alpha \leq \frac{c'\sqrt{\log n}}{2^{n/m}q^{n/m}\sqrt{n}\sqrt{m}}
\end{align*}

For $q = \texttt{poly}(n)$, $m\geq n\log q$, we get that $\alpha = \bigo(\frac{\sqrt{\log n}}{n\sqrt{\log q}})$ suffices. Thus for these parameters, we have established that $\text{LWE}^m_{n,q,\alpha} \in \mathbf{SZK}$.

\begin{remark}
    The above range of parameters for $q,m,\alpha$ overlap with the range where (correct and secure) Public Key Encryption schemes based on hardness of LWE are built (eg. \cite{reg05}). For such parameters, the PKE schemes will get compromised in a computational model where $\mathbf{SZK}$ problems are efficiently decidable. 
    However for some parameter choices, we don't have the $\mathbf{SZK}$ containment. For instance, the specific choice of parameters Regev mentions in \cite{reg05} for his public key cryptosystem, i.e. $q = \Theta(n^2), m = \Theta(n\log n), \alpha = \Theta(\frac{1}{\sqrt{n}\log^2 n})$ do not satisfy the above requirements, in particular $\alpha$ here is larger than required. 
\end{remark}

\begin{remark}
    Bruno et al. \cite{clwe} introduced a continuous version of LWE called CLWE, and its homogeneous variant, hCLWE which is equivalent to CLWE. Bogdanov et. al. built a PKE scheme based on hCLWE \cite{hclwe_in_szk}, and also showed $\text{hCLWE} \in \mathbf{SZK}$ for suitable parameters. Again, their PKE scheme too gets compromised in a computational model where $\mathbf{SZK}$ problems are efficiently decidable. 
\end{remark}

\section{SD is equivalent to SISD}

In this section, we show that the Sequentially Invertible Statistical Difference problem is no easier than the Statistical Difference problem. In fact, we show this for the 1-sequentially invertible case; i.e. we give a poly-time Karp reduction from the statistical difference problem $\text{SD}_{a,b}$ to the 1-sequentially invertible statistical difference problem $\text{SISD}_{a,b,1}$. 

The idea for this reduction is very straightforward and the intuition is as follows: Given circuits $C_0, C_1$ as inputs for SD, we construct 1-sequentially invertible circuit families $\mathfrak{C}^0 = \{C_{i}^0\}_{i\in[l]}, \mathfrak{C}^1 = \{C_{i}^1\}_{i\in[l]}$  in the following manner: The output (and main input) length of these circuits will be $k_I + k$, where $k_I = \max(k_0,k_1)$, i.e. we will aim to capture both an input and an output string in the output of a circuit in the sequence. We will design $\mathfrak{C}^b$ so that a sample from its distribution will be of the form $(x',C_b(x))$ for $x,x'$ being uniformly random strings and independent of each other. Due to this, the probability that $y\in\{0,1\}^k$ occurs as a suffix in a sample is exactly $D(C_b)(y)$ and it is not hard to see that the statistical difference between $C_0$ and $C_1$ will be preserved by $\mathfrak{C}^0$ and $\mathfrak{C}^1$. The only question is how to construct such a sequence using atmost 1 bit of randomness at each step. In the first part of the sequence we will construct samples of the form $(x,C_b(x_{1\ldots k_b}))$ for $x$ being a uniformly random string. This is doable by using 1 bit of randomness at the $i\textsuperscript{th}$ step to form the $i\textsuperscript{th}$ bit of $x$. In the second part, we will utilize the given bit of randomness to `perturb' each bit of $x$ (leaving $C_b(x)$ untouched).

\begin{theorem}\label{sequentially invertible circuits preserving statistical difference}
    Let $C_0:\{0,1\}^{k_0}\to\{0,1\}^k,\ C_1:\{0,1\}^{k_1}\to\{0,1\}^k$ be circuits. Then there exist 1-sequentially invertible circuit families $\mathfrak{C}^0 = \{C^0_{i,\rightarrow},C^0_{i,\leftarrow}\}_{i\in[l]},\ \mathfrak{C}^1 = \{C^1_{i,\rightarrow},C^1_{i,\leftarrow}\}_{i\in[l]}$ such that $\Delta(D(C_0),D(C_1)) = \Delta(D(\mathfrak{C}^0),D(\mathfrak{C}^1))$. In particular $\text{SD}_{a,b} \leq_p \text{SISD}_{a,b,1}$.
\end{theorem}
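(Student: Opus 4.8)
The plan is to realize the informal construction as an explicit $1$-invertible sequence on $k_I + k$ bits, where $k_I = \max(k_0,k_1)$, and then reduce the analysis of $\Delta(D(\mathfrak{C}^0),D(\mathfrak{C}^1))$ to a product-distribution computation. First I would pad the circuits so that both read $k_I$ input bits: replacing $C_b$ by $\tilde C_b(x) := C_b(x_{1\ldots k_b})$ leaves $D(\tilde C_b) = D(C_b)$, since discarding extra uniform input bits does not change the output distribution. This lets me treat both circuits as having the common input length $k_I$, so that the two sequences $\mathfrak{C}^0,\mathfrak{C}^1$ share the same bit-count and the same length $\ell$, as required by the definition of SISD.

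The sequence has $\ell = 2k_I+1$ steps acting on a state written as a pair $(u,v)$ with $u \in \{0,1\}^{k_I}$ and $v \in \{0,1\}^k$, initialised at $(0^{k_I},0^k)$. For $i \in [k_I]$, step $i$ consumes one random bit $z_i$ and XORs it into the $i$-th coordinate of $u$; after these $k_I$ steps $u$ equals the uniform string $z = z_1\cdots z_{k_I}$ and $v = 0^k$. Step $k_I+1$ uses no randomness and applies the permutation $(u,v) \mapsto (u,\, v \oplus C_b(u_{1\ldots k_b}))$, placing $C_b(z_{1\ldots k_b})$ into the output block. Finally, for $i \in [k_I]$, step $k_I+1+i$ consumes a fresh random bit $w_i$ and XORs it into the $i$-th coordinate of $u$, so the final state is $(z \oplus w,\, C_b(z_{1\ldots k_b}))$. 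Each step uses at most one random bit, and each is its own inverse per hard-wired randomness (XOR with a fixed bit, and the map XORing $C_b(u_{1\ldots k_b})$ into $v$ while fixing $u$, are both involutions), so setting $C_{i,\leftarrow} := C_{i,\rightarrow}$ makes both sequences $1$-invertible; all circuits have size $\poly(n)$ since the only nontrivial one embeds $C_b$.

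It then remains to compute $D(\mathfrak{C}^b)$. Writing the sample as $(z \oplus w,\, C_b(z_{1\ldots k_b}))$ with $z,w$ independent and uniform, the key point is that conditioned on $z$ the string $z \oplus w$ is uniform over $\{0,1\}^{k_I}$; hence $z \oplus w$ is uniform and independent of $z$, and in particular independent of the output block $v = C_b(z_{1\ldots k_b})$. Since $z_{1\ldots k_b}$ is itself uniform over $\{0,1\}^{k_b}$, we get $v \sim D(C_b)$, so $D(\mathfrak{C}^b)$ is exactly the product $U(\{0,1\}^{k_I}) \times D(C_b)$. Because the first factor is identical for $b=0$ and $b=1$, the standard identity $\Delta(P\times A,\, P\times B) = \Delta(A,B)$ yields $\Delta(D(\mathfrak{C}^0),D(\mathfrak{C}^1)) = \Delta(D(C_0),D(C_1))$. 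As the construction is poly-time and uses $r=1$ random bit per step, this gives the Karp reduction $\text{SD}_{a,b} \leq_p \text{SISD}_{a,b,1}$, with yes- and no-instances mapped correctly because the thresholds $a,b$ are preserved verbatim.

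The step I expect to be the crux is the final re-randomization block and the argument that it forces $D(\mathfrak{C}^b)$ into product form. It is tempting to stop after step $k_I+1$ and output the correlated sample $(z,\, C_b(z_{1\ldots k_b}))$, but this fails: the statistical difference between the joint laws of $(x,C_0(x))$ and $(x,C_1(x))$ measures how often $C_0$ and $C_1$ disagree pointwise rather than $\Delta(D(C_0),D(C_1))$ — e.g. taking $C_0,C_1$ to be bijections with $C_1 \equiv C_0 \oplus \delta$ for a fixed nonzero $\delta$ gives identical uniform output distributions yet joint laws with disjoint supports. Perturbing $u$ by an independent uniform $w$ is precisely what severs the correlation between the input and output blocks and collapses the problem to the product form, so the care in the proof goes into verifying that this perturbation is both implementable one bit at a time by involutions and exactly decorrelating.
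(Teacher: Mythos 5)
Your proposal is correct and follows essentially the same route as the paper: the identical $(2k_I+1)$-step construction (XOR in the input bits, apply $C_b$ once, then re-randomize the input block with fresh bits, all via involutions), with the final statistical-difference claim obtained via the product-form identity $\Delta(P\times A, P\times B)=\Delta(A,B)$, which is exactly the cancellation the paper carries out explicitly in its closing computation. Your added counterexample explaining why the re-randomization block is necessary is a nice piece of intuition but does not change the argument.
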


\begin{proof}
    Let $k_I = \max(k_0,k_1)$. We design circuit families $\mathfrak{C}^0, \mathfrak{C}^1$:
    $$\mathfrak{C}^b = \Big\{C^b_{i,\rightarrow},C^b_{i,\leftarrow}:\{0,1\}^{k_I+k}\times\{0,1\}^{r_i}\times\{0,1\}^{k_I+k}\Big\}_{i\in[l]}$$ for $l = 2k_I+1$, $r_i\leq 1$, $b\in\{0,1\}$.
    
    \begin{itemize}
        \item The first $k_I$ ($i = 1\ \text{to}\ k_I$) circuits $C^b_{i,\rightarrow}:\{0,1\}^{k_I+k}\times\{0,1\}\to\{0,1\}^{k_I+k}$ perform the following: for $x\in\{0,1\}^{k_I}, y\in\{0,1\}^k, z\in\{0,1\}$ 
        $$((x,y),z) \mapsto ((x_1,\ldots,x_{i-1},x_i\oplus z,x_{i+1},\ldots,x_n),y)$$
        \item The $k_I+1$\textsuperscript{th} circuit $C^b_{k_I+1,\rightarrow}:\{0,1\}^{k_I+k}\to\{0,1\}^{k_I+k}$ performs the following: for $x\in\{0,1\}^{k_I}, y\in\{0,1\}^k, z\in\{0,1\}$ 
        $$(x,y) \mapsto (x,y\oplus C_b(x_{1\ldots k_b}))$$
        \item The last $k_I$ circuits $C^b_{i+k_I+1,\rightarrow}:\{0,1\}^{k_I+k}\times\{0,1\}\to\{0,1\}^{k_I+k}$ ($i = 1\ \text{to}\ k_I$) perform the following: for $x\in\{0,1\}^{k_I}, y\in\{0,1\}^k, z\in\{0,1\}$ 
        $$((x,y),z) \mapsto ((x_1,\ldots,x_{i-1},x_i\oplus z,x_{i+1},\ldots,x_n),y)$$
        \item For each $i\in[l]$, $C^b_{i,\leftarrow} = C^b_{i,\rightarrow}$.
    \end{itemize}
    
    Clearly $\mathfrak{C}^0, \mathfrak{C}^1$ are 1-sequentially invertible families. Consider the output distributions corresponding to these families:
    
    $$D(\mathfrak{C}^b) := \Big\{C^b_{\ell,\rightarrow} \Big(\cdots C^b_{2,\rightarrow} \Big( C^b_{1,\rightarrow} \Big(0^{k_I+k}; z_1\Big); z_2\Big); \cdots; z_\ell\Big)\ \Big |\ z_1,\ldots,z_{k_I},z_{k_I+2},\ldots,z_l\xleftarrow{\$}\{0,1\}\Big\}$$
    
    Let $\mathfrak{C}^b_{1\ldots k_I+1}$ denote the subsequence in $\mathfrak{C}^b$ of the first $k_I+1$ circuit pairs. Its distribution is of the form:
    $$D(\mathfrak{C}^b_{1\ldots k_I+1}) = \{(x,C_b(x_{1\ldots,k_b}))\ |\ x\xleftarrow{\$}\{0,1\}^{k_I}\}$$
    
    By the end of the $l$\textsuperscript{th} iteration, this distribution becomes 
    $$D(\mathfrak{C}^b) = \{(x\oplus x',C_b(x_{1\ldots k_b}))\ |\ x\xleftarrow{\$}\{0,1\}^{k_I}, x'\xleftarrow{\$}\{0,1\}^{k_I}\} = \{(x',C_b(x))\ |\ x\xleftarrow{\$}\{0,1\}^{k_b}, x'\xleftarrow{\$}\{0,1\}^{k_I}\}$$
    
    In particular we have 
    $$D(\mathfrak{C}^b)(x,y) := \Prob[(x,y)\leftarrow D(\mathfrak{C}^b)] = \Prob[x\leftarrow\{0,1\}^{k_I}]\cdotp\Prob[y\leftarrow D(C_b)] = \frac{1}{2^{k_I}}\Prob[y\leftarrow D(C_b)]$$
    
    Thus the statistical difference between $D(\mathfrak{C}^0)$ and $D(\mathfrak{C}^1)$ can be calculated as follows:
    \begin{align*}
        \Delta(D(\mathfrak{C}^0),D(\mathfrak{C}^1)) &= \frac{1}{2}\sum_{x\in\{0,1\}^{k_I}}\sum_{y\in\{0,1\}^k}{\abs{D(\mathfrak{C}^0)(x,y)-D(\mathfrak{C}^1)(x,y)}}\\
        &= \frac{1}{2}\sum_{x\in\{0,1\}^{k_I}}\sum_{y\in\{0,1\}^k}{\Bigabs{\frac{1}{2^{k_I}}\Prob[y\leftarrow D(C_0)] - \frac{1}{2^{k_I}}\Prob[y\leftarrow D(C_1)]}}\\
        &= \frac{1}{2}\sum_{y\in\{0,1\}^k}{\Bigabs{\Prob[y\leftarrow D(C_0)] - \Prob[y\leftarrow D(C_1)]}}\ \ = \Delta(D(C_0),D(C_1))
    \end{align*}    
\end{proof}

By combining Theorem~\ref{sequentially invertible circuits preserving statistical difference}, Theorem~\ref{SD_is_SZK_complete} and Theorem~\ref{sisd in bqp^oi} we immediately get the following corollary.

\begin{corollary}($\mathbf{SZK} \subseteq \mathbf{BQP^{OI}}$) For every promise problem $\Pi = (\Pi_{\text{YES}},\Pi_{\text{NO}})\in\mathbf{SZK}$, there exists a polynomial time quantum algorithm \(\text{M}^{\text{OI}}\) with oracle access to a computable order interference oracle $\mathcal{OI}$, such that:
\begin{itemize}
    \item For every \(x \in \Pi_\text{YES}\), \(\text{M}^{\text{OI}}(x) = 1\) with probability at least \(p(x)\),
    \item For every \(x \in \Pi_\text{NO}\), \(\text{M}^{\text{OI}}(x) = 0\) with probability at least \(p(x)\),
\end{itemize}
such that \(\forall x \in \Pi_\text{YES}\cup \Pi_\text{NO}, p(x) \geq 1 - 2^{-\texttt{poly}(x)}\) for some polynomial \(\texttt{poly}(\cdot)\).
\end{corollary}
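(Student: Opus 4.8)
The plan is to realize $\Pi$ as the start of a chain of polynomial-time mapping (Karp) reductions whose endpoint is a problem already shown to lie in $\mathbf{BQP^{OI}}$, and then to invoke closure of $\mathbf{BQP^{OI}}$ under such reductions. Concretely, I would build the chain
\[
\Pi \;\leq_p\; \text{SD}_{1/3,2/3} \;\leq_p\; \text{SD}_{2^{-n},\,1-2^{-n}} \;\leq_p\; \text{SISD}_{2^{-n},\,1-2^{-n},\,1},
\]
and then argue that the endpoint lies in $\text{SISD}_{\poly,\bigo(\log n)} \subseteq \mathbf{BQP^{OI}}$.

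The first reduction is immediate from Theorem~\ref{SD_is_SZK_complete}: since $\text{SD}_{1/3,2/3}$ is $\mathbf{SZK}$-complete, every $\Pi\in\mathbf{SZK}$ reduces to it in polynomial time. The third reduction is exactly Theorem~\ref{sequentially invertible circuits preserving statistical difference}, which converts any SD instance into a $1$-sequentially invertible SISD instance with the \emph{same} gap parameters. The final containment is Theorem~\ref{sisd in bqp^oi}. So the only link not handed to us directly is the middle one, and this is where the real work sits.

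The main obstacle is a gap-size mismatch. To place $\text{SISD}_{a,b,1}$ inside the family $\text{SISD}_{\poly,\bigo(\log n)}$ one must have $b^2 - 2a + a^2 \geq 1/\poly(n)$, but the completeness thresholds $a=1/3,\,b=2/3$ violate this outright, since $b^2-2a+a^2 = 4/9 - 2/3 + 1/9 = -1/9 < 0$. Hence Theorem~\ref{sequentially invertible circuits preserving statistical difference} cannot be applied directly after the completeness step, and the corollary does \emph{not} follow from the three cited theorems alone. I would bridge this by inserting the Sahai--Vadhan polarization lemma \cite{sv} as the second reduction: the pair $(1/3,2/3)$ is polarizable because $a < b^2$ (indeed $1/3 < 4/9$), so polarization produces a poly-time reduction $\text{SD}_{1/3,2/3} \leq_p \text{SD}_{2^{-n},\,1-2^{-n}}$. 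For the amplified thresholds $a'=2^{-n},\,b'=1-2^{-n}$ one checks $(b')^2 - 2a' + (a')^2 = 1 - o(1) \geq 1/\poly(n)$, so the SISD instance produced by Theorem~\ref{sequentially invertible circuits preserving statistical difference} genuinely lands in $\text{SISD}_{\poly,1}$; and since each circuit in the constructed sequence consumes only a single random bit, $r(n)=1=\bigo(\log n)$, placing the instance in $\text{SISD}_{\poly,\bigo(\log n)}$.

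It remains to assemble the chain. Each of the four links is a deterministic polynomial-time mapping reduction sending YES to YES and NO to NO, and these compose into a single poly-time reduction. Running it classically and feeding the output to the algorithm $\text{M}^{\text{OI}}$ of Theorem~\ref{sisd in bqp^oi} therefore decides $\Pi$; the success bound $p(x)\geq 1-2^{-\poly}$ is inherited verbatim from that theorem, as the reductions introduce no additional error. The one step demanding genuine care is the polarization insertion; everything else is bookkeeping on the gap parameters and composition of reductions.
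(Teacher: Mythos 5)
Your proof is correct, and it is actually tighter than the paper's own argument. The paper proves this corollary by asserting that Theorem~\ref{SD_is_SZK_complete}, Theorem~\ref{sequentially invertible circuits preserving statistical difference} and Theorem~\ref{sisd in bqp^oi} combine ``immediately'', i.e.\ the chain $\Pi \leq_p \mathrm{SD}_{1/3,2/3} \leq_p \mathrm{SISD}_{1/3,2/3,1}$ followed by an appeal to Theorem~\ref{sisd in bqp^oi}; it never checks the polynomial-gap condition. As you observe, that check fails: for $(a,b)=(1/3,2/3)$ one has $b^2-2a+a^2=-1/9<0$, so $\mathrm{SISD}_{1/3,2/3,1}\notin\mathrm{SISD}_{\poly,1}$ and Theorem~\ref{sisd in bqp^oi} does not apply to the instance produced by the paper's chain. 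Your repair --- inserting the Sahai--Vadhan Polarization Lemma to get $\mathrm{SD}_{1/3,2/3}\leq_p \mathrm{SD}_{2^{-n},1-2^{-n}}$ (legitimate since $1/3<(2/3)^2$) before invoking the gap-preserving reduction of Theorem~\ref{sequentially invertible circuits preserving statistical difference} --- is exactly what is needed: with $a'=2^{-n}$, $b'=1-2^{-n}$ one gets $(b')^2-2a'+(a')^2\geq 1-4\cdot 2^{-n}\geq \frac{1}{\poly(n)}$, and the sequences built by that theorem use one random bit per circuit, so $r(n)=1=\bigo(\log n)$ and the endpoint genuinely lies in $\mathrm{SISD}_{\poly,\bigo(\log n)}$. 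The remaining assembly (composing deterministic Karp reductions, inheriting $p(x)\geq 1-2^{-\texttt{poly}(x)}$ from Theorem~\ref{sisd in bqp^oi}) is routine and you handle it correctly. In short: you follow the same overall architecture as the paper, but you identified and filled a genuine gap in its proof; the paper should either invoke polarization explicitly or cite the SZK-completeness of SD at polarized thresholds rather than at $(1/3,2/3)$.
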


\section*{Acknowledgements} 
We thank Omri Shmueli for insightful discussions and comments on our draft.

\printbibliography

@misc{bqp_oi,
      title={Quantum Algorithms in a Superposition of Spacetimes}, 
      author={Omri Shmueli},
      year={2024},
      eprint={2403.02937},
      archivePrefix={arXiv},
      primaryClass={quant-ph},
      url={https://arxiv.org/abs/2403.02937}, 
}

@phdthesis{szk,
  title={A study of statistical zero-knowledge proofs},
  author={Vadhan, Salil Pravin},
  year={1999},
  school={Massachusetts Institute of Technology}
}

@article{sv,
author = {Sahai, Amit and Vadhan, Salil},
title = {A complete problem for statistical zero knowledge},
year = {2003},
issue_date = {March 2003},
publisher = {Association for Computing Machinery},
address = {New York, NY, USA},
volume = {50},
number = {2},
issn = {0004-5411},
url = {https://doi.org/10.1145/636865.636868},
abstract = {We present the first complete problem for SZK, the class of promise problems possessing statistical zero-knowledge proofs (against an honest verifier). The problem, called Statistical Difference, is to decide whether two efficiently samplable distributions are either statistically close or far apart. This gives a new characterization of SZK that makes no reference to interaction or zero knowledge.We propose the use of complete problems to unify and extend the study of statistical zero knowledge. To this end, we examine several consequences of our Completeness Theorem and its proof, such as:---A way to make every (honest-verifier) statistical zero-knowledge proof very communication efficient, with the prover sending only one bit to the verifier (to achieve soundness error 1/2).---Simpler proofs of many of the previously known results about statistical zero knowledge, such as the Fortnow and Aiello--Hεstad upper bounds on the complexity of SZK and Okamoto's result that SZK is closed under complement.---Strong closure properties of SZK that amount to constructing statistical zero-knowledge proofs for complex assertions built out of simpler assertions already shown to be in SZK.---New results about the various measures of "knowledge complexity," including a collapse in the hierarchy corresponding to knowledge complexity in the "hint" sense.---Algorithms for manipulating the statistical difference between efficiently samplable distributions, including transformations that "polarize" and "reverse" the statistical relationship between a pair of distributions.},
journal = {J. ACM},
month = mar,
pages = {196–249},
numpages = {54},
keywords = {zero knowledge, statistical difference, proof systems, Knowledge complexity}
}

@article{reg05,
  author =       "Oded Regev",
  title =        "On Lattices, Learning with Errors, Random Linear Codes, and Cryptography",
  journal =      "Proceedings of the thirty-seventh annual ACM symposium on Theory of computing",
  pages =        "84--93",
  year =         "2005",
  keywords =     "lattices"
}

@article{peik09,
  author =       "Chris Peikert",
  title =        "Public-key cryptosystems from the worst-case shortest vector problem",
  journal =      "Proceedings of the forty-first annual ACM symposium on Theory of computing",
  pages =        "333--342",
  year =         "2009",
  keywords =     "lattices"
}

@article{crypto_primitives_on_LWE,
  title={A decade of lattice cryptography},
  author={Peikert, Chris},
  journal={Foundations and trends{\textregistered} in theoretical computer science},
  volume={10},
  number={4},
  pages={283--424},
  year={2016},
  publisher={Now Publishers, Inc.}
}

@inproceedings{gapcvp_in_szk,
  title={On the limits of non-approximability of lattice problems},
  author={Goldreich, Oded and Goldwasser, Shafi},
  booktitle={Proceedings of the thirtieth annual ACM symposium on Theory of computing},
  pages={1--9},
  year={1998}
}

@inproceedings{clwe,
  title={Continuous lwe},
  author={Joan Bruna and Oded Regev and Min Jae Song and Yi Tang},
  booktitle={Proceedings of the 53rd Annual ACM SIGACT Symposium on Theory of Computing},
  pages={694--707},
  year={2021}
}

@inproceedings{hclwe_in_szk,
  title={Public-key encryption from homogeneous clwe},
  author={Bogdanov, Andrej and Cueto Noval, Miguel and Hoffmann, Charlotte and Rosen, Alon},
  booktitle={Theory of Cryptography Conference},
  pages={565--592},
  year={2022},
  organization={Springer}
}

@unpublished{ABD+CRYSTALSKyberVersion022021,
  title = {{{CRYSTALS-Kyber}} (Version 3.02) \textendash{} {{Submission}} to Round 3 of the {{NIST}} Post-Quantum Project},
  author = {Roberto Avanzi and Joppe W. Bos and L{\'e}o Ducas and Eike Kiltz and Tancr{\`e}de Lepoint and Vadim Lyubashevsky and John M. Schanck and Peter Schwabe and Gregor Seiler and Damien Stehl{\'e}},
  year = {2021},
  note = {\url{https://pq-crystals.org/kyber/resources.shtml}}
}

@misc{NIST2022,
  title = {Selected Algorithms 2022 - {{Post-Quantum Cryptography}}},
  author = {{NIST}},
  year = {2022},
  note = {\url{https://csrc.nist.gov/Projects/post-quantum-cryptography/selected-algorithms-2022}},
  urldate = {2022-10-30}
}

@article{regevLatticesLearningErrors2009,
  title = {On Lattices, Learning with Errors, Random Linear Codes, and Cryptography},
  author = {Regev, Oded},
  year = {2009},
  journal = {Journal of the ACM},
  volume = {56},
  number = {6},
  pages = {Art. 34, 40},
  note = {Preliminary version in STOC 2005},
}

@misc{cryptoeprint:2021/1332,
      author = {L{\'e}o Ducas and Wessel van Woerden},
      title = {On the Lattice Isomorphism Problem, Quadratic Forms, Remarkable Lattices, and Cryptography},
      howpublished = {Cryptology {ePrint} Archive, Paper 2021/1332},
      year = {2021},
      url = {https://eprint.iacr.org/2021/1332}
}

@article{lattice_bounds_banaszczyk,
  title={New bounds in some transference theorems in the geometry of numbers},
  author={Banaszczyk, Wojciech},
  journal={Mathematische Annalen},
  volume={296},
  pages={625--635},
  year={1993},
  publisher={Springer-Verlag}
}

@article{shor,
author = {Shor, Peter W.},
title = {Polynomial-Time Algorithms for Prime Factorization and Discrete Logarithms on a Quantum Computer},
journal = {SIAM Review},
volume = {41},
number = {2},
pages = {303-332},
year = {1999},

URL = { 
    
        https://doi.org/10.1137/S0036144598347011
    
    

},
eprint = { 
    
        https://doi.org/10.1137/S0036144598347011
    
    

}
,
    abstract = { A digital computer is generally believed to be an efficient universal computing device; that is, it is believed to be able to simulate any physical computing device with an increase in computation time by at most a polynomial factor. This may not be true when quantum mechanics is taken into consideration. This paper considers factoring integers and finding discrete logarithms, two problems that are generally thought to be hard on classical computers and that have been used as the basis of several proposed cryptosystems. Efficient randomized algorithms are given for these two problems on a hypothetical quantum computer. These algorithms take a number of steps polynomial in the input size, for example, the number of digits of the integer to be factored. }
}

@article{swap_test_1,
author = {Adriano Barenco and Andr\'{e} Berthiaume and David Deutsch and Artur Ekert and Richard Jozsa and Chiara Macchiavello},
title = {Stabilization of Quantum Computations by Symmetrization},
journal = {SIAM Journal on Computing},
volume = {26},
number = {5},
pages = {1541-1557},
year = {1997},

URL = { 
    
        https://doi.org/10.1137/S0097539796302452
    
    

},
eprint = { 
    
        https://doi.org/10.1137/S0097539796302452
    
    

}
,
    abstract = { We propose a method for the stabilization of quantum computations (including quantum state storage). The method is based on the operation of projection into \$\cal SYM\$, the symmetric subspace of the full state space of R redundant copies of the computer. We describe an efficient algorithm and quantum network effecting \$\cal SYM\$--projection and discuss the stabilizing effect of the proposed method in the context of unitary errors generated by hardware imprecision, and nonunitary errors arising from external environmental interaction. Finally, limitations of the method are discussed. }
}

@article{swap_test_2,
  title = {Quantum Fingerprinting},
  author = {Harry Buhrman and Richard Cleve and John Watrous and Ronald de Wolf},
  journal = {Phys. Rev. Lett.},
  volume = {87},
  issue = {16},
  pages = {167902},
  numpages = {4},
  year = {2001},
  month = {9},
  publisher = {American Physical Society},
  url = {https://link.aps.org/doi/10.1103/PhysRevLett.87.167902}
}

@inproceedings{ref:solving_sd_by_quantum_states,
author = {Aharonov, Dorit and Ta-Shma, Amnon},
title = {Adiabatic quantum state generation and statistical zero knowledge},
year = {2003},
isbn = {1581136749},
publisher = {Association for Computing Machinery},
address = {New York, NY, USA},
url = {https://doi.org/10.1145/780542.780546},
abstract = {The design of new quantum algorithms has proven to be an extremely difficult task. This paper considers a different approach to the problem, by studying the problem of 'quantum state generation'.We first show that any problem in Statistical Zero Knowledge (including eg. discrete log, quadratic residuosity and gap closest vector in a lattice) can be reduced to an instance of the quantum state generation problem. Having shown the generality of the state generation problem, we set the foundations for a new paradigm for quantum state generation. We define 'Adiabatic State Generation' (ASG), which is based on Hamiltonians instead of unitary gates. We develop tools for ASG including a very general method for implementing Hamiltonians (The sparse Hamiltonian lemma), and ways to guarantee non negligible spectral gaps (The jagged adiabatic path lemma). We also prove that ASG is equivalent in power to state generation in the standard quantum model. After setting the foundations for ASG, we show how to apply our techniques to generate interesting superpositions related to Markov chains.The ASG approach to quantum algorithms provides intriguing links between quantum computation and many different areas: the analysis of spectral gaps and groundstates of Hamiltonians in physics, rapidly mixing Markov chains, statistical zero knowledge, and quantum random walks. We hope that these links will bring new insights and methods into quantum algorithms.},
booktitle = {Proceedings of the Thirty-Fifth Annual ACM Symposium on Theory of Computing},
pages = {20–29},
numpages = {10},
keywords = {statistical zero knowledge, state generation, spectral gap, quantum sampling, quantum adiabatic computation, Markov chains, Hamiltonian},
series = {STOC '03}
}

@article{qg_effects_1,
  title = {Large Quantum Gravity Effects: Unforeseen Limitations of the Classical Theory},
  author = {Ashtekar, Abhay},
  journal = {Phys. Rev. Lett.},
  volume = {77},
  issue = {24},
  pages = {4864--4867},
  numpages = {0},
  year = {1996},
  month = {12},
  publisher = {American Physical Society},
  url = {https://link.aps.org/doi/10.1103/PhysRevLett.77.4864}
}

@article{qg_effects_2,
  title={Quantum mechanics and the covariance of physical laws in quantum reference frames},
  author={Giacomini, Flaminia and Castro-Ruiz, Esteban and Brukner, {\v{C}}aslav},
  journal={Nature communications},
  volume={10},
  number={1},
  pages={494},
  year={2019},
  publisher={Nature Publishing Group UK London}
}

@article{qg_effects_3,
  title={On the possibility of laboratory evidence for quantum superposition of geometries},
  author={Christodoulou, Marios and Rovelli, Carlo},
  journal={Physics Letters B},
  volume={792},
  pages={64--68},
  year={2019},
  publisher={Elsevier}
}

@article{qg_effects_4,
  title={Einstein's Equivalence principle for superpositions of gravitational fields and quantum reference frames},
  author={Giacomini, Flaminia and Brukner, {\v{C}}aslav},
  journal={arXiv preprint arXiv:2012.13754},
  year={2020}
}

@article{qg_effects_5,
  title={Spacetime Quantum Reference Frames and superpositions of proper times},
  author={Giacomini, Flaminia},
  journal={Quantum},
  volume={5},
  pages={508},
  year={2021},
  publisher={Verein zur F{\"o}rderung des Open Access Publizierens in den Quantenwissenschaften}
}

@article{qg_effects_6,
  title={Quantum superposition of spacetimes obeys Einstein's equivalence principle},
  author={Giacomini, Flaminia and Brukner, {\v{C}}aslav},
  journal={AVS Quantum Science},
  volume={4},
  number={1},
  year={2022},
  publisher={AIP Publishing}
}

@article{qg_experiment,
  title={Bell’s theorem for temporal order},
  author={Zych, Magdalena and Costa, Fabio and Pikovski, Igor and Brukner, {\v{C}}aslav},
  journal={Nature communications},
  volume={10},
  number={1},
  pages={3772},
  year={2019},
  publisher={Nature Publishing Group UK London}
}

@article{gravitational_decoherence_1,
  title={Models for universal reduction of macroscopic quantum fluctuations},
  author={Di{\'o}si, Lajos},
  journal={Physical Review A},
  volume={40},
  number={3},
  pages={1165},
  year={1989},
  publisher={APS}
}

@article{gravitational_decoherence_2,
  title={On gravity's role in quantum state reduction},
  author={Penrose, Roger},
  journal={General relativity and gravitation},
  volume={28},
  pages={581--600},
  year={1996},
  publisher={Springer}
}

@article{gravitational_decoherence_3,
  title={Gravitation and quantummechanical localization of macroobjects},
  author={Di{\'o}si, Lajos},
  journal={arXiv preprint arXiv:1412.0201},
  year={2014}
}

@article{gravitational_decoherence_4,
  title={A universal master equation for the gravitational violation of quantum mechanics},
  author={Di{\'o}si, Lajos},
  journal={Physics letters A},
  volume={120},
  number={8},
  pages={377--381},
  year={1987},
  publisher={Elsevier}
}

@article{gie_1,
  title={Spin entanglement witness for quantum gravity},
  author={Bose, Sougato and Mazumdar, Anupam and Morley, Gavin W and Ulbricht, Hendrik and Toro{\v{s}}, Marko and Paternostro, Mauro and Geraci, Andrew A and Barker, Peter F and Kim, MS and Milburn, Gerard},
  journal={Physical review letters},
  volume={119},
  number={24},
  pages={240401},
  year={2017},
  publisher={APS}
}

@article{gie_2,
  title={Gravitationally induced entanglement between two massive particles is sufficient evidence of quantum effects in gravity},
  author={Marletto, Chiara and Vedral, Vlatko},
  journal={Physical review letters},
  volume={119},
  number={24},
  pages={240402},
  year={2017},
  publisher={APS}
}

@article{quantum_discord,
  title={Quantum discord: a measure of the quantumness of correlations},
  author={Ollivier, Harold and Zurek, Wojciech H},
  journal={Physical review letters},
  volume={88},
  number={1},
  pages={017901},
  year={2001},
  publisher={APS}
}

@article{pmf,
  title = {Quantum computations without definite causal structure},
  author = {Chiribella, Giulio and D'Ariano, Giacomo Mauro and Perinotti, Paolo and Valiron, Benoit},
  journal = {Phys. Rev. A},
  volume = {88},
  issue = {2},
  pages = {022318},
  numpages = {15},
  year = {2013},
  month = {8},
  publisher = {American Physical Society},
  url = {https://link.aps.org/doi/10.1103/PhysRevA.88.022318}
}

@article{costa,
  doi = {10.22331/q-2022-03-01-663},
  url = {https://doi.org/10.22331/q-2022-03-01-663},
  title = {A no-go theorem for superpositions of causal orders},
  author = {Costa, Fabio},
  journal = {{Quantum}},
  issn = {2521-327X},
  publisher = {{Verein zur F{\"{o}}rderung des Open Access Publizierens in den Quantenwissenschaften}},
  volume = {6},
  pages = {663},
  month = mar,
  year = {2022}
}

\end{document}